\newtheorem{theorem}{{\bf Theorem}}
\newtheorem{lemma}{{\bf Lemma}}
\newtheorem{corollary}{{\bf Corollary}}
\newcommand{\qed}{\nobreak \ifvmode \relax \else
      \ifdim\lastskip<1.5em \hskip-\lastskip
      \hskip1.5em plus0em minus0.5em \fi \nobreak
      \vrule height0.75em width0.5em depth0.25em\fi}
\newcommand{\beq}{\begin{equation}}
\newcommand{\eeq}{\end{equation}}
\newcommand{\bdisp}{\begin{displaymath}}
\newcommand{\edisp}{\end{displaymath}}
\newcommand{\beqarr}{\begin{eqnarray}}
\newcommand{\eeqarr}{\end{eqnarray}}
\newcommand{\bmlt}{\begin{multline}}
\newcommand{\emlt}{\end{multline}}
\newcommand{\beqarrn}{\begin{eqnarray*}}
\newcommand{\eeqarrn}{\end{eqnarray*}}
\newcommand{\barr}{\begin{array}}
\newcommand{\earr}{\end{array}}
\newcommand{\benum}{\begin{enumerate}}
\newcommand{\eenum}{\end{enumerate}}
\newcommand{\bit}{\begin{itemize}}
\newcommand{\eit}{\end{itemize}}
\newcommand{\bc}{\begin{center}}
\newcommand{\ec}{\end{center}}
\newcommand{\bdes}{\begin{description}}
\newcommand{\edes}{\end{description}}
\newcommand{\bfig}{\begin{figure}}
\newcommand{\efig}{\end{figure}}
\newcommand{\bemq}{\begin{quote} \begin{em}}
\newcommand{\eemq}{\end{em} \end{quote}}
\newcommand{\bmp}{\begin{minipage}}
\newcommand{\emp}{\end{minipage}}
\newcommand{\eqn}[1]{(\ref{#1})}
\newcommand{\brac}[1]{\left({#1}\right)}
\newcommand{\cbrac}[1]{\left\{{#1}\right\}}
\newcommand{\floor}[1]{\left\lfloor{#1}\right\rfloor}
\newcommand{\ceil}[1]{\left\lceil {#1} \right\rceil}
\newcommand{\define}{\triangleq}
\newcommand{\ie}{{\it i.e.}}
\newcommand{\iid}{{i.i.d.}}
\newcommand{\expect}[1]{{\bf E}\left[{#1}\right]}
\newcommand{\prob}[1]{\text{Pr}\brac{#1}}
\newcommand{\bsp}{\begin{slide*}}
\newcommand{\esp}{\end{slide*}}
\newcommand{\bsl}{\begin{slide}}
\newcommand{\esl}{\end{slide}}
\newcommand{\tmax}{T_{\max}}
\newcommand{\Ps}[2]{P_{#1}}
\newcommand{\Psopt}[2]{P_{#1}^{*}}
\newcommand{\alp}[2]{\alpha_{#2}[#1]} 
\newcommand{\alpopt}[2]{\alpha_{#2}^{*}[#1]} 
\newcommand{\bet}[2]{\beta_{#2}[#1]} 
\newcommand{\betopt}[2]{\beta_{#2}^{*}[#1]} 
\newcommand{\avetime}[2]{\Gamma_{#1}}
\newcommand{\avetimeopt}[2]{\Gamma^{*}_{#1}}
\newcommand{\mapopt}{f^{*}}
\newcommand{\mapoptLag}{f^{\lambda *}}
\begin{document}

\title{Optimal Timer Based Selection Schemes}

\author{Virag Shah, {\it Student Member, IEEE}, Neelesh B.  Mehta, {\it Senior Member, IEEE},  Raymond Yim, {\it Member, IEEE} %
  \thanks{A part of this work is being submitted to ICC 2010.}
  \thanks{V.\ Shah and N.\ B.\ Mehta are with the Electrical
    Communication Engineering Dept. at the Indian Institute of Science
    (IISc), Bangalore, India.  R.\ Yim is with the Mitsubishi Electric
    Research Labs (MERL), Cambridge, MA, USA.}  \thanks{Emails: \{\tt
    virag4u@gmail.com, nbmehta@ece.iisc.ernet.in, yim@merl.com\}.}
 }

 \maketitle

\begin{abstract}
  
  Timer-based mechanisms are often used to help a given (sink) node
  select the best helper node among many available nodes.
  Specifically, a node transmits a packet when its timer expires, and
  the timer value is a monotone non-increasing function of its local
  suitability metric.  The best node is selected successfully if no
  other node's timer expires within a `vulnerability' window after its
  timer expiry, and so long as the sink can hear the available nodes.
  In this paper, we show that the optimal metric-to-timer mapping that
  (i)~maximizes the probability of success or (ii)~minimizes the
  average selection time subject to a minimum constraint on the
  probability of success, maps the metric into a set of discrete timer
  values.  We specify, in closed-form, the optimal scheme as a
  function of the maximum selection duration, the vulnerability
  window, and the number of nodes. An asymptotic characterization of
  the optimal scheme turns out to be elegant and insightful. For any
  probability distribution function of the metric, the optimal scheme
  is scalable, distributed, and performs much better than the popular
  inverse metric timer mapping. It even compares favorably with
  splitting-based selection, when the latter's feedback overhead is
  accounted for.
\end{abstract}

\begin{keywords}
  Selection, timer, cooperative communications, spatial diversity, multiuser
  diversity, multiple access, relays, VANET
\end{keywords}

\IEEEpeerreviewmaketitle

\section{Introduction}

Many wireless communication schemes benefit by selecting the `best'
node from the many available candidate nodes and using it for data
transmission. For example, cooperative communication systems exploit
spatial diversity and avoid synchronization problems among relays by
selecting the relay that is best suited to forward the source's
message to the
destination~\cite{nosratinia_ComMag_2004,zhao_2005_eurasip,beres_TWC_2008,hwang_2008_TWC,zhou_zhou_TWC_2008,jing_2009_TWC,huang_2008_TWC,kim_2007_Mobisys}.
Cellular systems exploit spatial diversity by making the base station
transmit to (or receive from) the mobile station that has the highest
instantaneous channel gain to (or from) the base station.  Fairness is
ensured by selecting on the basis of the channel gain divided by the
average throughput or average energy
consumed~\cite{Tse_2005,michalopoulos_2008_TWC}.  In sensor networks,
node selection helps increase network
lifetime~\cite{chen_2007_TSP,huang_2008_TWC}.  In vehicular ad-hoc
networks (VANETs), vehicle selection improves the speed of information
dissemination by ensuring that the vehicles that rebroadcast the
emergency broadcast message are far away from the source of the
message~\cite{nekovee_2007_VTC,yim_2009_ITS}. In some of these systems, a base
station or access point (which we shall generically refer to as a
`sink') can help the selection process by hearing transmissions from
candidate nodes and sending feedback.  On the other hand, in the
emergency broadcast scenario in VANETs, coordination issues make
explicit feedback from a sink infeasible.

The mechanism that physically selects the best node is, therefore, an
important component in many wireless systems. In all the above
systems, each node maintains a local suitability metric, and the
system attempts to select the node with the highest metric.
In~\cite{bletsas_jsac_2006}, an {\em inverse metric} timer-based
scheme was proposed, in which a node with metric $\mu$ sets its timer
as $c/\mu$, where $c$ is a constant, and transmits a packet when its
timer expires.  This simple solution ensures that the first node that
transmits is the best node.  In~\cite{zhao_2005_eurasip}, nodes with
channel gains above $\mu_u$ transmit at time $0$, while those with
channel gains below $\mu_l$ transmit at time $\tmax$.  In the interval
$[\mu_l,\mu_u)$, the mapping is linearly decreasing.  In general, to
ensure that the best node transmits first, the mapping is a
deterministic monotone non-increasing
function~\cite{bletsas_jsac_2006,zhao_2005_eurasip}.

The timer-based selection mechanism is attractive because of its
simplicity and its distributed nature.  It requires no feedback during
the selection process. A node only needs to include its identity in
the packet that it transmits upon timer expiry. A sink, if present,
only needs to broadcast a single message at the end of the selection
process indicating success or failure. Depending on the application,
the sink may also broadcast in the message the identity of the relay
has been selected. Consequently, timer-based selection has been used
in several systems such as cooperative relaying to find the best relay
node~\cite{bletsas_jsac_2006,beres_TWC_2008}, wireless network
coding~\cite{ding_2008_TWC} to find the best relays that will combine
the signals transmitted by multiple sources, mobile multi-hop
networks~\cite{kim_2007_Mobisys},
VANETs~\cite{nekovee_2007_VTC,yim_2009_ITS} to determine which vehicle
should rebroadcast an emergency message, wireless
LANs~\cite{hwang_2008_TWC} to enable opportunistic channel access, and
sensor networks~\cite{zhao_2005_eurasip,zhou_zhou_TWC_2008}.  It is
different from the centralized polling mechanism, in which the sink
node polls each node about its metric and then chooses the best one.
It also differs from the time-slotted distributed splitting
algorithms~\cite{qin_infocomm_2004,shah_2009_ICC} that also ensure
that the first packet that the sink successfully decodes is from the
best node. The difference lies in the extensive slot-by-slot
three-state (idle, collision, or success) feedback of the splitting
algorithm that controls which nodes transmit in the next slot.

The timer scheme works by ensuring that the best node transmits first.
However, for successful selection in practical systems, it is
necessary that no other timer expires within a time window of the
expiry of the best node's timer. This time window, the {\em
  vulnerability window}~\cite{kleinrock_1975_Tcom}, will be explained
in detail in the next section.  Selection failure occurs when two or
more packets collide at the receiver and become indecipherable, or
unequal node-to-sink propagation delays cause a packet from the best
node to not arrive first at the receiver.  One can decrease failure
rate by increasing the size of the vulnerability window or the maximum
selection duration, $\tmax$.  However, the latter is not desirable
because it reduces the time available to the selected node to transmit
data.  If the metrics depend on instantaneous channel fading gains,
such an increase also reduces the ability of the system to handle
larger Doppler spreads.

In this paper, we consider the general timer scheme in which the
metric-to-timer function is monotone non-increasing. In contrast to
the ad hoc mappings used in the literature, we determine the optimal
mapping that maximizes the probability of success or minimizes the
expected time required for selection subject to a minimum constraint
on the probability of success. The former is relevant in systems that
reserve a fixed amount of time for selection,
e.g.,~\cite{lo_VT_2009_HARQ}, while the latter is relevant in systems
that instead use the best node as soon as it is selected.

The specific contributions of the paper are the following.  We provide
a full recursive characterization of the optimal metric-to-timer
mapping function, and show that it is amenable to practical
implementation. We show that optimal timer schemes for the two previously stated
problems set the timer expiry at only a finite number of points in
time.  {\em That is, the optimal timer values are discrete.}  The number of points depends on the maximum allowed time
for selection $\tmax$ and vulnerability window. In the asymptotic
regime, in which the number of nodes, $k$, is large, we show that the
description of the optimal scheme and its analysis simplifies
remarkably, and takes an elegant and simple recursive form.  The
asymptotic regime turns out to be a good approximation even for $k$ as
small as 5.  Our results hold for all real-valued metrics with arbitrary probability distribution functions.

Compared to the inverse metric mapping, we show that the probability
that the system fails to select the best node can often be
substantially decreased by at least a factor of 2 for the same maximum
selection duration.  And, for a given probability of success, the
average number of slots of the optimal scheme is less by a factor of
two or more than that of the inverse timer scheme. We also show that
the optimal timer scheme is scalable in that its performance does not
catastrophically degrade as the number of nodes increases.

The paper is organized as follows. The system model and the general
timer-based selection scheme are described in Sec.~\ref{sec:System
  Model}. The optimal schemes are derived and analyzed in
Sec.~\ref{sec:success_prob_max} and \ref{sec:ps_constraint}.
Section~\ref{sec:results} presents numerical simulations and compares
with previously proposed schemes, and is followed by our conclusions
in Sec.~\ref{sec:conclusions}.  Mathematical proofs are relegated to
the Appendix.

\section{Timer-Based Selection: System Model and Basics}
\label{sec:System Model}

We consider a system with $k$ nodes and a sink as shown in
Figure~\ref{fig:systemTimer}. The sink represents any node that is
interested in the message transmitted by the $k$ nodes; it need not
conduct any coordinating role.  Each node $i$ possesses a suitability
metric $\mu_i$ that is known only to that specific node.  The metrics
are assumed to be independent and identically distributed across
nodes. The probability distribution is assumed to be known by all
nodes. The aim of the selection scheme is to make the sink determine
which node has the highest $\mu_i$, henceforth called the `best' node.


Each node $i$, based on its local metric $\mu_{i}$, sets a timer $T_i
= f(\mu_{i})$, where $f(.)$ is called the metric-to-timer function.
When the timer expires (at time $T_{i}$), the node immediately
transmits a packet to the sink. The packet contains the identity of
the node along with other system-specific information. As
mentioned, the timer-based selection scheme always ensures that the
timer of a node with a larger metric expires no later than that of a
node with a smaller metric.  Consequently, {\em $f(\mu)$, in general,
  is a monotone non-increasing function.}  The selection process has a
{\em maximum selection duration} $\tmax$, after which nodes do not
start a transmission.  

For the sink to successfully decode the packet sent by the best node,
the start time of any other packet must not be earlier than the start
time of the packet of the best node plus a vulnerability window
$\Delta$.  Thus, the sink can decode the packet from the best node, if
the timers of the best and second best node, denoted by $T_{(1)}$ and
$T_{(2)}$ respectively, expire such that $T_{(2)} - T_{(1)} \geq
\Delta$.  The expiry of timers of other nodes, which occurs after
$T_{(2)}$, does not matter since the sink is only interested in the
best node.

The value of $\Delta$ depends on system capabilities.  For example,
$\Delta$ typically includes the maximum propagation and detection
delays between all nodes. $\Delta$ may also include the maximum
transmission time of packets in case carrier sensing is not used, in
which case the nodes do not need to overhear other transmissions.
Carrier sensing, which is commonly used today, is beneficial as it
reduces $\Delta$ since a node, when its timer expires, will overhear
transmissions and does not transmit if it senses another transmission.
Note, however, that the timer scheme works with carrier sensing and
without. For a system with half-duplex nodes, $\Delta$ may also
include the receive-to-transmit switching time.

Henceforth, we will abuse the above general definition of $\Delta$ and
instead say that a {\em collision} occurs when the timers of the
best and the second best nodes expire within a duration $\Delta$.
Thus, the best node is selected successfully if: (i)~the timer value
of the best node, $T_{(1)}$, is smaller than or equal to $\tmax$, and
(ii)~the transmission from the best node does not suffer from a
collision.  Otherwise, the best node selection process fails.  The
selection process stops at $T_{(1)}$ or $\tmax$, whichever is earlier.

In this paper, inability to select the best node is treated as a
failure or an outage.  In fact, if a sink is available, it may respond
to a selection failure in multiple ways. For example,
it may resolve the nodes whose packets collided during
the selection process, using extra feedback.  If a sink is not available,
then repeated transmission schemes can be used to improve the overall
reliability of broadcast messages.  The details of how the system
deals with a selection failure are beyond the scope of this paper.

To study the performance of selection schemes, we measure the {\em
  probability of successful selection } and the {\em expected stop
  time of the selection scheme}. These are
clearly relevant to all systems that use selection.  They motivate the
following two different schemes to optimize the metric-to-timer
mapping:
\begin{enumerate}
\item {\bf Scheme 1:} Maximize the probability of success given a
  maximum selection duration of $\tmax$ and the number of nodes $k$.
\item {\bf Scheme 2:} Minimize the expected selection time given a
  maximum selection duration of $\tmax$ such that the probability of
  success is at least $\eta$ when $k$ nodes are present.
\end{enumerate}
A minimum requirement on the success probability is needed in the
second scheme because otherwise a trivial scheme that makes each of
its nodes set its timer to 0 would be optimal. This is undesirable
because the probability of success of such a scheme is zero when $k
\geq 2$.

We assume that all nodes know $k$, as is also assumed in the splitting
approach in~\cite{qin_infocomm_2004,shah_2009_ICC}. This can be
achieved, for example, by making the sink broadcast $k$ occasionally.
The burden of this feedback is not significant since $k$ typically
varies on a much slower time scale than, for example, the
instantaneous channel fades.  Even when the sink is not available,
nodes can estimate $k$ by overhearing packet transmissions in the
network.

To keep notation simple, we first consider the case where the
metrics are uniformly distributed over the interval $[0,1)$.
Thereafter, the results are
generalized to all real-valued metrics with arbitrary probability distribution functions.

{\em Notation:} Floor and ceil operations are denoted by $\floor{.}$
and $\ceil{.}$, respectively.  $\expect{X}$ denotes the expected value
of a random variable (RV) $X$. Using order statistics
notation~\cite[Chp.~1]{david_2003}, the node with the $i$th largest
metric is denoted by $(i)$.  Consequently, $\mu_{(1)} \geq \mu_{(2)}
\geq \cdots \geq \mu_{(k)}$ and $T_{(1)} \leq \cdots \leq T_{(k)}$.
For notational convenience, the summation $\sum_{l=l_{1}}^{l_{2}}$
equals $0$ whenever $l_{1} > l_{2}$. We use the superscript $^{*}$ to
denote an optimal value; for example, optimal value of $x$ is $x^{*}$.
$\prob{A}$ denotes the probability of an event $A$, and $\prob{A|B}$
denotes the conditional probability of $A$ given $B$.

\section{Scheme 1: Maximizing The Probability of Success Given $\tmax$}
\label{sec:success_prob_max}

The goal here is to find an optimal mapping $\mapopt(\mu)$ in the
space of all monotone non-increasing functions $f: [0,1) \rightarrow
\mathbb{R}^{+}$, that maximizes the probability of success.  The
following lemma shows that an optimal $\mapopt(\mu)$ maps the metrics
into discrete timer values. Let $N = \floor{\frac{\tmax}{\Delta}}$.
\begin{lemma}
\label{discrete_maxsuccess}
An optimal metric-to-timer mapping $\mapopt(\mu)$ that maximizes the
probability of success within a maximum time $\tmax$ maps $\mu$ into
$(N+1)$ discrete timer values $\{ 0, \Delta, 2\Delta,\ldots, N\Delta
\}$.
\end{lemma}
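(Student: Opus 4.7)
The plan is to prove the discretization claim by a ``round-down'' argument. Given an arbitrary monotone non-increasing $f : [0,1) \to \mathbb{R}^+$, I define the candidate
\[
g(\mu) \define \begin{cases} \Delta\,\floor{f(\mu)/\Delta}, & f(\mu) \leq \tmax,\\ f(\mu), & f(\mu) > \tmax, \end{cases}
\]
and argue (i)~that $g$ is again monotone non-increasing, (ii)~that the ``transmitting'' part of its range, namely $\{g(\mu): f(\mu) \leq \tmax\}$, is contained in $\{0, \Delta, 2\Delta, \ldots, N\Delta\}$, and (iii)~that $\prob{\text{success}\mid g} \geq \prob{\text{success}\mid f}$. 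Since this will hold for any admissible $f$, the supremum of the success probability is attained inside the finite-parameter discrete class, which is precisely the content of the lemma.

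Parts (i) and (ii) are short. Monotonicity is a case check using that $\floor{\cdot/\Delta}$ is non-decreasing and that values of $f$ above $\tmax$ are preserved rather than rounded (so they always lie strictly above $N\Delta$, above every rounded value). The range claim is immediate: if $f(\mu) \leq \tmax$ then $\floor{f(\mu)/\Delta} \leq \floor{\tmax/\Delta} = N$, and $g(\mu)$ is by construction a non-negative integer multiple of $\Delta$.

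The crux is the sample-path implication ``success under $f$ $\Rightarrow$ success under $g$.'' Fix a realization of the metrics and let $\mu_{(1)} \geq \mu_{(2)}$ be the two largest. Success under $f$ is the event $\{f(\mu_{(1)}) \leq \tmax\} \cap \{f(\mu_{(2)}) - f(\mu_{(1)}) \geq \Delta\}$. The timer bound transfers directly since $g(\mu_{(1)}) \leq f(\mu_{(1)}) \leq \tmax$. For the vulnerability gap, write $f(\mu_{(j)}) = i_j\Delta + r_j$ with $i_j \in \mathbb{Z}_{\geq 0}$ and $r_j \in [0,\Delta)$. In the case $f(\mu_{(2)}) \leq \tmax$, one has $g(\mu_{(j)}) = i_j\Delta$; combining the hypothesis $(i_2 - i_1)\Delta + (r_2 - r_1) \geq \Delta$ with $r_2 - r_1 \in (-\Delta,\Delta)$ forces the integer $i_2 - i_1$ to be strictly positive and hence at least $1$, so $g(\mu_{(2)}) - g(\mu_{(1)}) \geq \Delta$. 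In the complementary case $f(\mu_{(2)}) > \tmax$, $g(\mu_{(2)}) = f(\mu_{(2)}) \geq f(\mu_{(1)}) + \Delta \geq g(\mu_{(1)}) + \Delta$. Either way, success under $g$ holds; taking probabilities yields the required dominance.

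The main obstacle I anticipate is the careful treatment of the partial slot $(N\Delta, \tmax]$ and of metrics with $f(\mu) > \tmax$. A naive rounding that swept everything above $N\Delta$ down to $N\Delta$ would silently promote previously non-transmitting lower-ranked nodes into a collision at slot $N$ and so would not dominate $f$. The split construction above---rounding down only within the admissible range and keeping silent metrics silent---avoids this pitfall. Once the range is discretized, the remaining question of \emph{which} partition of $[0,1)$ into $(N+1)$ slots achieves the optimum becomes a finite-parameter problem, which is then addressed by the closed-form recursion developed in the remainder of the section.
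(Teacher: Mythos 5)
Your proof is correct and arrives at the same discretized mapping as the paper, but by a genuinely different route. The paper proceeds by successive refinement: it builds a chain $f_1, f_2, \ldots$, where each $f_i$ collapses the timer values lying in $[(i-1)\Delta, i\Delta)$ down to $(i-1)\Delta$, and at each stage it splits the success probability $\Ps{N}{k}$ into disjoint events and checks term by term that each is preserved or enlarged, with a final step mapping the leftover interval $[N\Delta,\tmax]$ onto $N\Delta$. You instead apply the full rounding $g(\mu)=\Delta\floor{f(\mu)/\Delta}$ in one shot (the composition of all of the paper's steps) and prove dominance pathwise: on every realization, success under $f$ implies success under $g$, via the observation that $f(\mu_{(2)})-f(\mu_{(1)})\ge\Delta$ together with $r_2-r_1\in(-\Delta,\Delta)$ forces $i_2-i_1\ge 1$. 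That integer-gap step is the crisp kernel that the paper's event-by-event bookkeeping is implicitly relying on, and your pathwise implication is both shorter and strictly stronger than a comparison of probabilities; your care in leaving values above $\tmax$ untouched mirrors the paper's assignment of $\tmax+\epsilon$ to non-transmitting nodes and correctly avoids promoting silent nodes into slot $N$. One small slip to patch: you state the success event as $\{f(\mu_{(1)})\le\tmax\}\cap\{f(\mu_{(2)})-f(\mu_{(1)})\ge\Delta\}$, whereas under the model (and in the paper's decomposition~\eqn{eq:ps_split}) success also includes the sub-event $\{f(\mu_{(1)})\le\tmax<f(\mu_{(2)})\}$ with gap smaller than $\Delta$, because a node whose timer exceeds $\tmax$ never transmits at all. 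This extra sub-event transfers to $g$ trivially, since $g$ preserves values above $\tmax$ and only decreases values at or below it, so the pathwise dominance---and hence the lemma---still follows.
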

\begin{proof}
The proof is given in Appendix~\ref{lem:discrete_maxsuccess}.
\end{proof}

The discreteness result is intuitively in sync with the fact that time
slotted multiple access protocols are better than unslotted ones in
terms of throughput. However, there is a subtle but fundamental
distinction between our selection problem and the multiple access
problem. While slotting is better in multiple access protocols because
it reduces the vulnerability window, in our problem the vulnerability
window remains unchanged. Note that the above discrete mapping, while
optimal, need not be unique.  For example, when $N\Delta<\tmax$, the
highest timer value can be increased beyond $N\Delta$ without
affecting the probability of success. Also, any increase in the
discrete timer values that ensures that there are $(N+1)$ of them
below $\tmax$ and are spaced at least $\Delta$ apart, achieves the
same probability of success. Note also that the timer-based scheme is
different from the oft-employed RTS/CTS handshaking scheme, which
addresses the hidden terminal problem that may arise after the sink
receives the RTS packet successfully. In fact, the timer scheme may
even be used in the RTS backoff procedure to increase the success rate
of RTS packet reception.
%

{\em Implications of Lemma~\ref{discrete_maxsuccess}:} We have reduced an infinite-dimensional problem of finding
$f(\mu)$ over the space of all positive-valued monotone non-increasing
functions to one over $N+1$ real values that lie between 0 and
$\tmax$, as illustrated in Figure~\ref{fig:TimerScheme}.  To
completely characterize the optimal timer scheme, all we need to
specify is the contiguous metric intervals in $[0,1)$ that get
assigned to the timer values $0, \Delta,\ldots,N\Delta$. As shown in
Figure~\ref{fig:TimerScheme}, all nodes with metrics in the interval
$[1-\alp{0}{N},1)$, of length $\alp{0}{N}$, set their timers to 0.
Nodes with metrics in the next interval $[1-\alp{1}{N}-\alp{0}{N},
1-\alp{0}{N})$, of length $\alp{1}{N}$, set their timers to $\Delta$,
and so on. In general, nodes with metrics in the interval
$\left[1-\sum_{j=0}^{i} \alp{j}{N}, 1 - \sum_{j=0}^{i-1}
  \alp{j}{N}\right)$, of length $\alp{i}{N}$, set their timer to
$i\Delta$.  Any node with metric less than $\left(1-\sum_{j=0}^{N}
  \alp{j}{N}\right)$ does not transmit at all.  Therefore, the
probability of success is entirely a function of $N$,
$\alp{0}{N},\ldots,\alp{N}{N}$, and the number of nodes $k$. To keep
the notation simple, we do not explicitly show its dependency on $k$.
We now determine the optimal $\alpopt{j}{N}$ and fully characterize
the optimal scheme.
\begin{theorem}
\label{optimal_maxPs}
The probability of success in selecting the best node among $k$ nodes, subject to
a maximum selection duration of $\tmax$, is maximized when the timer
of a node with metric $\mu$ is
\begin{equation}
\mapopt(\mu) = \left\{\barr{ll}
i\Delta, &  1-\sum_{j=0}^{i} \alpopt{j}{N} \leq \mu < 1-\sum_{j=0}^{i-1} \alpopt{j}{N}, \ \text{for} \ 0\le i \le N \\
\tmax + \epsilon, & \text{otherwise}
\earr \right.,
\end{equation}
where $N = \floor{\frac{\tmax}{\Delta}}$ and $\epsilon$ is any arbitrary
 strictly positive real number. The $N+1$ interval
lengths $\alpopt{0}{N}, \ldots, \alpopt{N}{N}$ are recursively given
by
\begin{equation}
\label{eq:alp_maxPs}
\alpopt{j}{N} = \left\{
\barr{ll}
\frac{1-\Psopt{N-1}{k}}{k-\Psopt{N-1}{k}}, & j=0 \\
\left(1-\alpopt{0}{N}\right)\alpopt{j-1}{N-1}, & 1\le j \le N
\earr
\right.,
\end{equation}
where $\alpopt{0}{0} = 1/k$. $\Psopt{N}{k}$ is the maximum probability
of success that equals
\begin{equation}
\label{eq:nonrecursive_Ps}
\Psopt{N}{k}= k \sum_{l=0}^{N}\alpopt{l}{N}\left(1-\sum_{j=0}^{l}\alpopt{j}{N}\right)^{k-1}.
\end{equation}
\end{theorem}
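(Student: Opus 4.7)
The plan is to use Lemma~\ref{discrete_maxsuccess} to reduce the infinite-dimensional problem to a finite-dimensional optimization over the interval lengths $\alp{0}{N},\ldots,\alp{N}{N}$, to derive the closed-form success probability (\ref{eq:nonrecursive_Ps}) directly, and then to prove the recursion (\ref{eq:alp_maxPs}) by induction on $N$.

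First I would derive (\ref{eq:nonrecursive_Ps}). Under the discrete mapping of Lemma~\ref{discrete_maxsuccess}, selection succeeds iff the best node's metric lies in some interval $i\in\{0,\ldots,N\}$ while no other node lies in that same interval; nodes in later intervals have timers separated from $T_{(1)}$ by at least $\Delta$ and so cause no collision. Since the metrics are i.i.d.\ uniform on $[0,1)$, the probability that a specified node is alone in interval $i$ with every other node having metric below $1-\sum_{j=0}^{i}\alp{j}{N}$ is $\alp{i}{N}(1-\sum_{j=0}^{i}\alp{j}{N})^{k-1}$. Multiplying by the combinatorial factor $k$ for the choice of node and summing over $i$ gives (\ref{eq:nonrecursive_Ps}).

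Next, I would set up the recursion by conditioning on whether any node falls in the top interval, splitting (\ref{eq:nonrecursive_Ps}) as
\[
P_s^{(N,k)} = k\,\alp{0}{N}(1-\alp{0}{N})^{k-1} + (1-\alp{0}{N})^k\,\tilde{P},
\]
where the factor $(1-\alp{0}{N})^k$ is the probability that no node has metric in $[1-\alp{0}{N},1)$ and $\tilde P$ is the resulting conditional success probability. Rescaling the remaining metric range by $1/(1-\alp{0}{N})$ and substituting $\tilde\alpha_j=\alp{j+1}{N}/(1-\alp{0}{N})$ shows that $\tilde P$ is exactly (\ref{eq:nonrecursive_Ps}) written in the $\tilde\alpha_j$'s with $N$ replaced by $N-1$. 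Hence the $(N+1)$-variable optimization decouples: for any $\alp{0}{N}$, the optimal $\alp{j}{N}$ for $j\ge 1$ are $(1-\alp{0}{N})\alpopt{j-1}{N-1}$ (the scaling relation in (\ref{eq:alp_maxPs})), and $\tilde P=\Psopt{N-1}{k}$ at the optimum.

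Finally, I would substitute $\tilde P=\Psopt{N-1}{k}$ and optimize over the single variable $\alp{0}{N}$. Differentiating, setting the derivative to zero, and dividing by $k(1-\alp{0}{N})^{k-2}$ yields $(1-\alp{0}{N})(1-\Psopt{N-1}{k})=(k-1)\alp{0}{N}$, which solves to $\alpopt{0}{N}=(1-\Psopt{N-1}{k})/(k-\Psopt{N-1}{k})$. Checking the boundaries gives $P_s=\Psopt{N-1}{k}$ at $\alp{0}{N}=0$ and $P_s=0$ at $\alp{0}{N}=1$, so the interior critical point is indeed the maximum. The base case $N=0$ is a one-variable maximization of $k\alp{0}{0}(1-\alp{0}{0})^{k-1}$ giving $\alpopt{0}{0}=1/k$, and induction on $N$ then closes the argument. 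The main obstacle is justifying the decoupling step, which rests on recognizing the self-similar structure of (\ref{eq:nonrecursive_Ps}) under the rescaling $\tilde\alpha_j=\alp{j+1}{N}/(1-\alp{0}{N})$; once that is established, the rest reduces to a single-variable calculus problem and a clean induction.
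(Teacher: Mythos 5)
Your proposal follows essentially the same route as the paper's proof: reduce to the $N+1$ interval lengths via Lemma~\ref{discrete_maxsuccess}, obtain~(\ref{eq:nonrecursive_Ps}) from the "best node alone in its interval" event, expose the self-similar structure by conditioning on the top interval and rescaling by $1/(1-\alp{0}{N})$ so the tail optimization decouples to the $N-1$ problem, and then solve the single-variable first-order condition for $\alpopt{0}{N}$. The derivation is correct (your explicit boundary check at $\alp{0}{N}\in\{0,1\}$ is a small addition the paper leaves implicit), so there is nothing substantive to flag.
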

\begin{proof}
  The proof is given in Appendix~\ref{proof of optimal_maxPs}.
\end{proof}

The discrete nature of the optimal scheme also makes it amenable to
practical implementation. Each node only needs to store an unwrapped
version of the above recursion in the form of a look up table that has
$N+1$ entries $\cbrac{\alpopt{0}{N},\ldots,\alpopt{N}{N}}$. The
entries are a function of $k$. A node only needs to determine the
interval its metric lies in and chooses the timer accordingly.


\subsection{Asymptotic Analysis of Optimal Scheme as $k \to \infty$ with Finite $N$ }
\label{sec:asymp_ps_max}

We now provide asymptotic expressions for the optimal timer scheme as
the number of nodes $k \to \infty$. The maximum selection duration,
$\tmax$, or equivalently $N$, is kept fixed. As we shall see, the
recursions simplify to a simple and elegant form because\ a scaled
version of the metric follows a Poisson process~\cite{wolff}.  The
asymptotic expressions are also relevant practically because, as we
shall see, they approximate well the optimal solution of
Theorem~\ref{optimal_maxPs} for $k$ as small as 5.

From~\eqn{eq:alp_maxPs}, it can be seen that $\alpopt{j}{N}$ tends to
zero as $k \to \infty$. Therefore, for node $i$, consider a scaled
metric $y_{i} = k(1-\mu_{i})$, and normalize the interval lengths to
\begin{equation}
\betopt{j}{N} = k\alpopt{j}{N}.
\end{equation}
Thus, selecting a node with highest $\mu_i$ is equivalent to selecting
the node with the lowest $y_i$.  Let $y_{(1)} \leq y_{(2)} \leq \cdots
\leq y_{(k)} $. Define the point process $M(z) \define
\sup{\left\{k\ge1:y_{(k)}\leq z\right\}}$. $M(z)$ is simply the number
of nodes whose $y_{i} = k(1 - \mu_{i})$ is less than $z$.
\begin{lemma}
The process $M(z)$ forms a Poisson process as $k \to \infty$.
\label{lem:poisson}
\end{lemma}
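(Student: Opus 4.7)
The plan is to verify the three defining properties of a rate-1 Poisson process for $M(z)$: $M(0)=0$, independent increments, and Poisson-distributed counts on intervals of length equal to their Lebesgue measure. The natural strategy is to establish convergence of all finite-dimensional distributions to those of the desired limit; the monotonicity and right-continuity of $M(\cdot)$ are immediate from its definition as a counting process of $k$ points.

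First, I would fix an arbitrary collection $0 = z_0 < z_1 < \cdots < z_m$ and compute the joint distribution of the increments $\brac{M(z_1)-M(z_0), \ldots, M(z_m)-M(z_{m-1})}$ for finite $k$. Since each $\mu_i$ is uniform on $[0,1)$, the scaled variable $y_i = k(1-\mu_i)$ is uniform on $(0,k]$, and therefore $\prob{y_i \in (z_{j-1},z_j]} = (z_j - z_{j-1})/k$ whenever $z_m \le k$ (which holds for all $k$ large enough). Since the $k$ metrics are i.i.d., the increment vector is multinomially distributed with $k$ trials, cell probabilities $p_j = (z_j - z_{j-1})/k$ for $j=1,\ldots,m$, and a leftover probability $1 - z_m/k$ accounting for $y_i \notin (0,z_m]$.

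Second, I would take $k \to \infty$ and invoke the classical Poisson limit for the multinomial. Writing the joint pmf
\beq
\frac{k!}{n_1!\cdots n_m!\brac{k-\sum_j n_j}!}\,\prod_{j=1}^{m} p_j^{n_j}\brac{1-\frac{z_m}{k}}^{k-\sum_j n_j},
\eeq
the falling factorial satisfies $k!/\brac{(k-\sum_j n_j)!\,k^{\sum_j n_j}} \to 1$ for fixed $n_j$, and $(1-z_m/k)^{k-\sum_j n_j} \to e^{-z_m} = \prod_j e^{-(z_j-z_{j-1})}$. Combining these gives the product form $\prod_{j=1}^{m} e^{-(z_j-z_{j-1})}\,(z_j-z_{j-1})^{n_j}/n_j!$, which is precisely the joint law of $m$ independent Poisson random variables with means equal to the interval lengths. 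This yields both the independent-increments property and the correct Poisson marginal on each interval.

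Third, since the finite-dimensional distributions of $M$ converge to those of a rate-1 Poisson process and $M$ is a non-decreasing, right-continuous, integer-valued process, the limit can be identified with a Poisson process by standard uniqueness results for point processes (\eg,~\cite{wolff}). The main technical obstacle is the multinomial-to-Poisson convergence step, but it is a standard textbook computation; the only subtlety is the boundary condition $z_m \le k$, which is harmless because we take $k\to\infty$ with the $z_j$'s held fixed.
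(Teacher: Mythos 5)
Your proposal is correct, and it is in fact more thorough than the paper's own argument. The paper's proof only computes the one-dimensional marginal: it writes $\prob{M(z)=l}=\binom{k}{l}\brac{\frac{z}{k}}^{l}\brac{1-\frac{z}{k}}^{k-l}$, observes that this binomial pmf tends to the Poisson pmf $e^{-z}z^{l}/l!$, and then appeals to~\cite[Chp.~2]{wolff} to conclude that the limiting process is Poisson. You instead compute the full joint law of the increments over an arbitrary finite partition, identify it as a multinomial, and pass to the limit to obtain a product of independent Poisson laws. The two computations are the same Poisson limit theorem at heart, but yours operates at the level of finite-dimensional distributions and therefore \emph{derives} the independent-increments property rather than importing it from the reference. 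This is a meaningful difference: the independent-increments property is exactly what the paper leans on later (in the proof of Theorem~\ref{asymp_maxPs} and the Independence corollary), so your version closes a small logical gap that the paper papers over with the citation. The only cost is a slightly longer calculation with the falling factorial and the leftover multinomial cell, which you handle correctly, including the harmless boundary condition $z_m\le k$.
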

\begin{proof}
  The $\{y_{i}\}_{i=1}^k$ are uniformly and identically distributed in
  $(0,k]$. Thus,
  $\prob{M(z)=l}=\binom{k}{l}\left(\frac{z}{k}\right)^{l}\left(1-\frac{z}{k}\right)^{(k-l)}$,
  for $0 \le l \le l$, and tends to $e^{-z} \frac{z^{l}}{l!}$ as $k
  \to \infty$. Thus, it follows from~\cite[Chp.~2]{wolff} that $M(z)$
  forms a Poisson process with rate 1 as $k \to \infty$.
\end{proof}
This result enables the use of the {\em independent increments
  property} of Poisson processes~\cite[Chp.~2]{wolff}. Simply stated,
the property says that the number of points that occur in disjoint
intervals are independent of each other. We use it below to determine
the optimal $\betopt{j}{N}$.
\begin{theorem}
\label{asymp_maxPs}
The optimal $\betopt{j}{N}$ that maximize the probability of
success are given by
\begin{equation}
\label{eq:betaopt}
\betopt{j}{N} = \left\{
\barr{ll}
1, & j = N\\
1 - e^{-\betopt{j+1}{N}},& 0 \le j \le N-1
\earr \right..
\end{equation}
Also, the probability of success of the optimal scheme is
$\Psopt{N}{\infty} = e^{-\betopt{0}{N}}$.
%
\end{theorem}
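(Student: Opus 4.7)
The plan is to derive Theorem~\ref{asymp_maxPs} by taking the $k\to\infty$ limit of the recursion in Theorem~\ref{optimal_maxPs} and then closing an induction on $N$. First, I would apply the rescaling $\betopt{j}{N}=k\alpopt{j}{N}$ inside both the recursion \eqn{eq:alp_maxPs} and the formula \eqn{eq:nonrecursive_Ps}. Using $(1-x/k)^{k-1}\to e^{-x}$ on compact sets (and the fact that for fixed $N$ only finitely many bounded terms are involved, so limits pass through), this yields
\begin{equation*}
\Psopt{N}{\infty} \;=\; \sum_{l=0}^{N}\betopt{l}{N}\,e^{-B_l},\qquad B_l \define \sum_{j=0}^{l}\betopt{j}{N},
\end{equation*}
together with two limiting identities obtained from \eqn{eq:alp_maxPs}: the ``shift'' $\betopt{j}{N} = \betopt{j-1}{N-1}$ for $1\le j\le N$ (since $1-\alpopt{0}{N}\to 1$), and $\betopt{0}{N} = 1 - \Psopt{N-1}{\infty}$ (from the $j=0$ case). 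The same expression for $\Psopt{N}{\infty}$ can alternatively be read directly off Lemma~\ref{lem:poisson}: by the independent-increments property, $\betopt{l}{N}\,e^{-B_l}$ is the probability that the limiting Poisson process of rate $1$ has exactly one point in $[B_{l-1},B_l)$ and none in $[0,B_{l-1})$, i.e., the event that the best node sets its timer to $l\Delta$ without a collision.

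Second, I would run an induction on $N$ to establish \eqn{eq:betaopt}. The base case $N=0$ is immediate: $\alpopt{0}{0}=1/k$ gives $\betopt{0}{0}=1$, matching the $j=N$ branch of \eqn{eq:betaopt}, and $\Psopt{0}{\infty}=1\cdot e^{-1}=e^{-\betopt{0}{0}}$. For the inductive step, assume Theorem~\ref{asymp_maxPs} holds at $N-1$. Iterating $\betopt{j}{N}=\betopt{j-1}{N-1}$ down to $\betopt{0}{0}$ gives $\betopt{N}{N}=1$. For $1\le j\le N-1$, the same shift, combined with the inductive hypothesis $\betopt{j-1}{N-1}=1-e^{-\betopt{j}{N-1}}$ and the observation $\betopt{j+1}{N}=\betopt{j}{N-1}$, delivers $\betopt{j}{N}=1-e^{-\betopt{j+1}{N}}$. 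For $j=0$, $\betopt{0}{N}=1-\Psopt{N-1}{\infty}=1-e^{-\betopt{0}{N-1}}=1-e^{-\betopt{1}{N}}$.

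Third, to close the induction on the success probability, I would split off the $l=0$ term and reindex the remainder using the shift:
\begin{equation*}
\Psopt{N}{\infty} \;=\; \betopt{0}{N}\,e^{-\betopt{0}{N}} \;+\; e^{-\betopt{0}{N}}\sum_{l'=0}^{N-1}\betopt{l'}{N-1}\,e^{-\sum_{j=0}^{l'}\betopt{j}{N-1}} \;=\; e^{-\betopt{0}{N}}\bigl(\betopt{0}{N}+\Psopt{N-1}{\infty}\bigr).
\end{equation*}
By the inductive hypothesis $\Psopt{N-1}{\infty}=e^{-\betopt{0}{N-1}}$ and the just-established identity $\betopt{0}{N}=1-e^{-\betopt{0}{N-1}}$, the bracket collapses to $1$, yielding $\Psopt{N}{\infty}=e^{-\betopt{0}{N}}$.

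The main obstacle is conceptual rather than technical: one has to spot the ``shift'' $\betopt{j}{N}=\betopt{j-1}{N-1}$ hidden in the $k\to\infty$ limit of \eqn{eq:alp_maxPs}, since this is precisely what exposes the one-step recursion \eqn{eq:betaopt} and allows both claims (the $\beta$-recursion and the closed form for $\Psopt{N}{\infty}$) to be handled by a single induction. Justifying the interchange of limits is routine because $N$ is fixed, all quantities are bounded, and only finitely many are involved.
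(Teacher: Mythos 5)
Your proposal is correct, but it reaches Theorem~\ref{asymp_maxPs} by a genuinely different route than the paper. The paper never touches the finite-$k$ recursion of Theorem~\ref{optimal_maxPs} in this proof: it writes the asymptotic objective $\Ps{N}{\infty}(\bet{0}{N},\ldots,\bet{N}{N})=\sum_{l=0}^{N}\bet{l}{N}e^{-\sum_{j=0}^{l}\bet{j}{N}}$ directly from the Poisson independent-increments property (your ``alternative'' reading is exactly the paper's primary derivation), then sets $\partial \Ps{N}{\infty}/\partial\bet{m}{N}=0$ to obtain the stationarity identities $\sum_{l=m}^{N}\betopt{l}{N}e^{-\sum_{j=m+1}^{l}\betopt{j}{N}}=1$ for all $m$; subtracting consecutive identities gives \eqn{eq:betaopt}, and the $m=0$ identity immediately collapses $\Psopt{N}{\infty}$ to $e^{-\betopt{0}{N}}$. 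You instead pass to the limit $k\to\infty$ in \eqn{eq:alp_maxPs} and \eqn{eq:nonrecursive_Ps}, extract the shift $\betopt{j}{N}=\betopt{j-1}{N-1}$ and the relation $\betopt{0}{N}=1-\Psopt{N-1}{\infty}$, and close an induction on $N$. Both arguments are sound. Your approach buys a rigorous link between the finite-$k$ optimizers and their limits (it literally proves $k\alpopt{j}{N}\to\betopt{j}{N}$, which the paper leaves implicit), and it gets the Independence property of the corollary for free as an input rather than a consequence; the price is that you must carry the convergence of $\Psopt{N-1}{k}$ inside the inductive hypothesis to avoid circularity when evaluating the $j=0$ limit, a point you gesture at but should state explicitly. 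The paper's approach is self-contained in the asymptotic regime, avoids any limit-interchange issue by optimizing the limiting objective directly, and produces the intermediate identity \eqn{eq:deriv1} that it reuses for the closed form of $\Psopt{N}{\infty}$; on the other hand it only verifies first-order conditions, so strictly it establishes stationarity rather than optimality of the limiting problem, whereas your derivation inherits optimality from Theorem~\ref{optimal_maxPs} at each finite $k$.
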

\begin{proof}
 The proof is given in Appendix~\ref{proof of asymp_maxPs}.
\end{proof}

Theorem~\ref{asymp_maxPs} leads to the following key insights about
the optimal timer scheme.
\begin{corollary}[Scalability]
  As $k \to \infty$, the probability of success of the optimal scheme
  for any $\tmax$ is greater than or equal to $1/e$, with equality
  occurring only for $N = 0$.
\end{corollary}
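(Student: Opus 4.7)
The plan is to translate the corollary into a statement about $\betopt{0}{N}$ using the closed form $\Psopt{N}{\infty} = e^{-\betopt{0}{N}}$ from Theorem~\ref{asymp_maxPs}. Since $e^{-x} \ge 1/e$ iff $x \le 1$, with equality iff $x = 1$, it suffices to prove that $\betopt{0}{N} \le 1$ for every $N \ge 0$, and that $\betopt{0}{N} = 1$ only when $N = 0$.

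The base case $N = 0$ is immediate: the boundary clause of the recursion~\eqn{eq:betaopt} gives $\betopt{0}{0} = 1$, hence $\Psopt{0}{\infty} = e^{-1} = 1/e$, delivering the equality case. For $N \ge 1$, I would exploit two elementary properties of the map $g(x) \define 1 - e^{-x}$ that drives the recursion: $g$ is strictly increasing on $[0,\infty)$, and $g(x) < x$ for every $x > 0$.

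Starting from $\betopt{N}{N} = 1$, a single application of $g$ gives $\betopt{N-1}{N} = 1 - e^{-1} < 1$. A straightforward backward induction on $j$ then shows $\betopt{j}{N} < 1$ for all $0 \le j \le N-1$: assuming $\betopt{j+1}{N} \le 1$, monotonicity of $g$ yields $\betopt{j}{N} = g(\betopt{j+1}{N}) \le g(1) = 1 - e^{-1} < 1$. Specializing to $j = 0$ gives $\betopt{0}{N} < 1$, so $\Psopt{N}{\infty} = e^{-\betopt{0}{N}} > e^{-1} = 1/e$, which is the strict inequality claimed for $N \ge 1$.

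There is essentially no obstacle: the entire argument rests on the monotonicity of $g$ and the strict inequality $g(x) < x$ for $x > 0$. The only point that needs a bit of care is tracking strictness in order to rule out equality for $N \ge 1$, which is handled by noting that applying $g$ to any positive value strictly decreases it. As a side observation, the same reasoning shows that the sequence $\betopt{N}{N}, \betopt{N-1}{N}, \ldots, \betopt{0}{N}$ is strictly decreasing in $j$ and, as $N$ grows, tends to the unique non-negative fixed point $0$ of $g$, so that $\Psopt{N}{\infty} \to 1$---a useful scalability refinement beyond what the corollary requires.
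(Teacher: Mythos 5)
Your proof is correct and follows exactly the paper's route: reduce the claim to $\betopt{0}{N}\le 1$ via $\Psopt{N}{\infty}=e^{-\betopt{0}{N}}$, then read this off the recursion \eqn{eq:betaopt} starting from $\betopt{N}{N}=1$ and using $1-e^{-x}<1$. You simply spell out the backward induction and the strictness for $N\ge 1$ that the paper's one-sentence proof leaves implicit; your closing observation that $\betopt{0}{N}\to 0$ and hence $\Psopt{N}{\infty}\to 1$ is also correct and consistent with the paper's numerics.
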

\begin{proof}
  This follows because $\betopt{0}{0} = 1$, and $\betopt{0}{N} \leq 1$
  from the recursion in~\eqn{eq:betaopt}.
\end{proof}

\begin{corollary}[Monotonicity]
\label{cor:monotonicity}
$\betopt{0}{N} < \betopt{1}{N} < \cdots <
\betopt{N}{N}$.
\end{corollary}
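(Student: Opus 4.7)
The plan is to use the explicit recursion from Theorem~\ref{asymp_maxPs} together with the elementary inequality $1 - e^{-x} < x$ for $x > 0$, and then argue by downward induction on $j$.

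First I would record the two ingredients I need about the map $g(x) = 1 - e^{-x}$: (i) $g$ is strictly positive on $(0,1]$, since $e^{-x} < 1$ for $x > 0$; and (ii) $g(x) < x$ for $x > 0$, which follows from the fact that $h(x) = x - 1 + e^{-x}$ satisfies $h(0) = 0$ and $h'(x) = 1 - e^{-x} > 0$ on $(0, \infty)$. Both facts are standard and require only one line each.

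Next I would run the induction downward from $j = N$. The base case is $\betopt{N}{N} = 1 > 0$, given directly by~\eqn{eq:betaopt}. For the inductive step, assume $\betopt{j+1}{N} \in (0, 1]$. Then~\eqn{eq:betaopt} gives $\betopt{j}{N} = 1 - e^{-\betopt{j+1}{N}} = g(\betopt{j+1}{N})$, which is strictly positive by (i) and strictly less than $\betopt{j+1}{N}$ by (ii). This simultaneously maintains the positivity hypothesis and yields the desired strict inequality $\betopt{j}{N} < \betopt{j+1}{N}$ at each step, which chains to the full ordering claimed in the corollary.

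I do not expect a real obstacle here: the work is essentially the observation that $g$ is a strict contraction of $(0, 1]$ toward $0$, so each application of the recursion strictly decreases the value as $j$ decreases. The only thing to be careful about is not using the weaker inequality $1 - e^{-x} \le x$, which would only give a non-strict monotonicity; the strict form is needed and is immediate from $h'(x) > 0$ for $x > 0$.
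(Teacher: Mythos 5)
Your proof is correct and follows essentially the same route as the paper, which also derives the result from the recursion in~\eqn{eq:betaopt} together with the strict inequality $1 - e^{-x} < x$ for $x > 0$. Your additional downward induction to keep $\betopt{j}{N} \in (0,1]$ just makes explicit the positivity that the paper leaves implicit.
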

\begin{proof}
  The result follows from~\eqn{eq:betaopt} and the inequality $1 -
  e^{-x} < x$, for $x > 0$.
\end{proof}
This result reflects a behavior typical of finite horizon dynamic
programming problems.  In our problem, selection at a discrete time
value does not happen when either a collision occurs or the best node
does not transmit.  As the time available decreases, the risk of
selection failure due to the best node not transmitting increases. To
counteract this, the monotonicity property makes the optimal scheme
take on a higher risk of collision.

\begin{corollary}[Independence]
 $\betopt{N-r}{N}$ depends only on $r$, and is  independent of $N$.
\end{corollary}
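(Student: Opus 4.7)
The plan is to exploit the backward-recursive structure of the optimal $\betopt{j}{N}$ given in Theorem~\ref{asymp_maxPs}. The key observation is that the recursion in~\eqn{eq:betaopt} is anchored at the top index $j = N$ (with $\betopt{N}{N} = 1$) and proceeds downward via $\betopt{j}{N} = 1 - e^{-\betopt{j+1}{N}}$. Reindexing by the ``distance from the top'' $r = N - j$ should make the $N$-dependence drop out entirely.

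Concretely, I would fix an integer $r \ge 0$ and define $g_r \define \betopt{N-r}{N}$ for any $N \ge r$. The goal is to show that $g_r$ does not depend on $N$. I would proceed by induction on $r$. For the base case $r = 0$, we have $g_0 = \betopt{N}{N} = 1$ directly from~\eqn{eq:betaopt}, which is manifestly independent of $N$. For the inductive step, assume $g_{r-1}$ is independent of $N$ for all $N \ge r - 1$. Then, for any $N \ge r$, the index $N - r$ satisfies $0 \le N-r \le N-1$, so the second branch of~\eqn{eq:betaopt} applies and gives
\begin{equation*}
\betopt{N-r}{N} = 1 - e^{-\betopt{N-r+1}{N}} = 1 - e^{-g_{r-1}},
\end{equation*}
which is independent of $N$ by the inductive hypothesis. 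This establishes the claim, and in fact exhibits the explicit one-step recursion $g_r = 1 - e^{-g_{r-1}}$ with $g_0 = 1$, showing that $\betopt{N-r}{N}$ is obtained by iterating the map $x \mapsto 1 - e^{-x}$ exactly $r$ times starting from $1$.

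There is no real obstacle here; the recursion in Theorem~\ref{asymp_maxPs} is deliberately structured so that reading it from the boundary $j = N$ downward makes the independence transparent. The only minor care needed is to check that the index bounds in~\eqn{eq:betaopt} are respected throughout the induction (i.e., that $N - r \le N - 1$ whenever we invoke the second branch), which is immediate from $r \ge 1$ in the inductive step.
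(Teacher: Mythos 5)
Your proof is correct and follows essentially the same route as the paper: both arguments observe that the recursion~\eqn{eq:betaopt} is anchored at $\betopt{N}{N}=1$ for every $N$ and propagates downward purely as a function of the distance from that anchor, so that $\betopt{j}{N}=\betopt{j-1}{N-1}$ (equivalently, your $g_r = 1 - e^{-g_{r-1}}$ with $g_0=1$). The paper additionally sketches a second, probabilistic justification via the independent increments property of the Poisson process, but its primary argument is the same reindexing you give.
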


\begin{proof}
  Since $\betopt{N}{N} = \betopt{N-1}{N-1} = 1$, it follows
  from~\eqn{eq:betaopt} that $\betopt{j}{N} = \betopt{j-1}{N-1}$, for
  $j \ge 1$.  This also follows from the independent increments
  property: given that no node exists with $y_i \in (0,\betopt{0}{N}
  ]$, $M(z+\betopt{0}{N})$ is a rate 1 Poisson process, and time
  $(N-1)\Delta$ is left to select the best node. By arguing
  successively, we get $\betopt{j}{N} = \betopt{j-l}{N-l}$ for $j \ge
  l$.
\end{proof}


\newcommand{\lag}[2]{L_{#2}^{#1}}
\newcommand{\lagopt}[2]{L_{#2}^{*{#1}}}

\section{Scheme 2: Minimizing the Expected Selection Time}
\label{sec:ps_constraint}

Our aim now is to minimize the expected selection time,
$\avetime{N}{k}$, subject to the constraint that the probability of
success, $\Ps{N}{k}$, exceeds $\eta$.\footnote{The inclusion of the
  subscript $N$ in the symbol for probability of success will be
  become clear from Lemma~\ref{discrete_mintime}. This is done to keep
  the notation consistent throughput the paper.}  Formally, the
constrained optimization problem is:

\begin{equation}\label{eq:constrained}
\min_{f(\mu): [0,1) \rightarrow \mathbb{R}^{+}}
\avetime{N}{k} \quad \text{subject to}  \ \Ps{N}{k} \ge \eta.   
\end{equation}

{\em Feasibility of Solution}: For a given $\tmax$, a solution to the
problem above exists {\em if and only if} $\eta$ is less than or equal
to the optimum probability of success for Scheme 1.  This is because
Scheme 1, by definition, achieves the highest probability of success
given $\tmax$.  Henceforth, we shall assume that the solution is
feasible.  The following Lemma shows that the optimal mapping
$\mapopt(\mu)$ for this problem also is discrete.
\begin{lemma}
\label{discrete_mintime}
The optimal metric-to-timer mapping $\mapopt(\mu)$ that minimizes the
expected selection time subject to a minimum probability
of success constraint, $\eta$,
maps $\mu$ into $(N+1)$ discrete timer values $\{ 0, \Delta,
2\Delta,\ldots, N\Delta \}$, where $N = \floor{\frac{\tmax}{\Delta}}$.
\end{lemma}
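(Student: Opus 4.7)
The plan is to adapt the discretization argument of Lemma~\ref{discrete_maxsuccess}, but to track the expected selection time in addition to the probability of success. Starting from an arbitrary feasible monotone non-increasing mapping $f$, I will construct a mapping $g$ that takes values only in $\{0,\Delta,2\Delta,\ldots,N\Delta\}$ (plus possibly a single value strictly larger than $\tmax$ for metrics that $f$ already sends above $\tmax$) and that simultaneously satisfies $\Ps{N}{k}(g) \ge \Ps{N}{k}(f) \ge \eta$ and $\avetime{N}{k}(g) \le \avetime{N}{k}(f)$. Once such a $g$ exists for every feasible $f$, the constrained optimum is attained within the discrete class asserted by the lemma.

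For the construction, partition $[0,\tmax]$ into the $N+1$ half-open intervals $I_i=[i\Delta,(i+1)\Delta)$ for $0\le i\le N-1$ together with $I_N=[N\Delta,\tmax]$. By monotonicity of $f$, each preimage $A_i=f^{-1}(I_i)$ is a subinterval of $[0,1)$. Define $g$ by sending every metric in $A_i$ to the left endpoint $i\Delta$, and every metric with $f(\mu)>\tmax$ to some fixed value exceeding $\tmax$. Then $g$ is monotone non-increasing and has range contained in $\{0,\Delta,\ldots,N\Delta\}\cup(\tmax,\infty)$.

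The expected-time inequality is immediate: if $\mu_{(1)}\in A_j$ for some $j\le N$, then $T_{(1)}^{f}\in I_j$ while $T_{(1)}^{g}=j\Delta$, so $T_{(1)}^{g}\le T_{(1)}^{f}$; and if $\mu_{(1)}$ lies in no $A_i$, both processes stop at $\tmax$. Taking expectations gives $\avetime{N}{k}(g)\le\avetime{N}{k}(f)$. The success-probability inequality reuses the core reasoning of Lemma~\ref{discrete_maxsuccess}: if $f$ maps the best and second-best metrics into the same $I_i$, both mappings incur a collision; if $f$ maps them into adjacent intervals $I_j$ and $I_{j+1}$, their timer separation under $f$ can be less than $\Delta$ (failure) but is exactly $\Delta$ under $g$ (success); and for intervals farther apart both mappings succeed. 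Hence $g$ can only turn $f$-failures into $g$-successes, never the reverse, giving $\Ps{N}{k}(g)\ge\Ps{N}{k}(f)$, which in particular preserves feasibility.

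The only delicate point is the cross-boundary collision accounting at the $I_i/I_{i+1}$ interfaces, which is precisely what is handled in the proof of Lemma~\ref{discrete_maxsuccess} and can be invoked rather than rederived. The expected-time part is built into the construction itself and is essentially a one-line pointwise comparison of $T_{(1)}^{g}$ and $T_{(1)}^{f}$, so no new technical machinery is needed beyond what already appears in Appendix~\ref{lem:discrete_maxsuccess}.
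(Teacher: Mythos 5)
Your proposal is correct and is essentially the paper's argument: the paper likewise discretizes by rounding timer values in $[i\Delta,(i+1)\Delta)$ down to $i\Delta$, invokes the success-probability argument of Lemma~\ref{discrete_maxsuccess} to preserve feasibility, and observes that since timer values only decrease the expected selection time cannot increase. The only (cosmetic) difference is that you perform the rounding in one shot with a sample-path comparison of $T_{(1)}$ under $f$ and $g$, whereas the paper applies the refinement successively, one band at a time.
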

\begin{proof}
The proof is given in Appendix~\ref{proof of discrete_mintime}.
\end{proof}

Hence, to determine the optimal mapping, it is sufficient to look at
mappings defined by the $N+1$ variables
$\alp{0}{N},\ldots,\alp{N}{N}$, where a node with metric in the
interval $[1-\alp{0}{N},0)$ sets its timer to $0$, and a node with
metric in the interval $[1-\alp{1}{N}-\alp{0}{N}, 1-\alp{0}{N})$ set
its timer to $\Delta$, and so on, as illustrated in
Figure~\ref{fig:TimerScheme}. In general, a node with metric in the
interval $[1-\sum_{j=0}^{i} \alp{j}{N}, 1 - \sum_{j=0}^{i-1}
\alp{j}{N})$, of length $\alp{i}{N}$, set its timer to $i\Delta$. A
node whose metric is less than $1-\sum_{j=0}^{N} \alp{j}{N}$ does not
transmit at all.


Consider the minimization of an auxiliary function $\lag{\lambda}{N} \define
\avetime{N}{k} - \lambda \Ps{N}{k}$, for a given $\lambda \ge 0$. We
now show that the solution that minimizes $\lag{\lambda}{N}$ is the
solution of the optimization problem in~\eqn{eq:constrained}, and that
the inequality becomes an equality.  Let $\mapoptLag(\mu)$ be the
mapping with the lowest value of the auxiliary function for a given $\lambda$. Let its
probability of success and expected selection time be $P_{N}^{\lambda
  *}$ and $\Gamma_{N}^{\lambda *}$, respectively.  Consider any other
feasible scheme $f^{\prime}(\mu)$ with corresponding probability of
success $P_{N}^{\prime}$ and expected selection time
$\Gamma_{N}^{\prime}$.  Therefore, $\Gamma_{N}^{\prime} - \lambda
P_{N}^{\prime} \geq \Gamma_{N}^{\lambda *} - \lambda P_{N}^{\lambda
  *}$.  If $P_{N}^{\prime} \ge P_{N}^{\lambda *}$, then
$\Gamma_{N}^{\lambda *} \le \Gamma_{N}^{\prime} - \lambda
\left(P_{N}^{\prime} - P_{N}^{\lambda *} \right) \le
\Gamma_{N}^{\prime}$ since $\lambda \geq 0$.  Therefore, the expected
selection time of $\mapoptLag(\mu)$ is the lowest among all timer
mappings for which $P_{N}^{\prime} \ge P_{N}^{\lambda *}$.
Consequently, if we choose $\lambda$ such that $P_{N}^{\lambda *} =
\eta$, then the resulting mapping $\mapoptLag(\mu)$ is the solution
of~\eqn{eq:constrained}.

The following theorem
specifies the optimal timer scheme as a function of $\lambda$.
\begin{theorem}
\label{optimal_mintime}

Given $\lambda \ge 0$, the auxiliary function $\lag{\lambda}{N}$ is minimized
when a node with metric $\mu$ sets its timer as $\mapopt(\mu)$, where
\begin{equation}
\mapopt(\mu) = \left\{
\barr{ll}
i\Delta, & 1-\sum_{j=0}^{i} \alpopt{j}{N} \leq \mu < 1-\sum_{j=0}^{i-1} \alpopt{j}{N}, \ \text{for} \ 0\le i \le N \\
\tmax + \epsilon, & \text{otherwise}
\earr \right.,
\end{equation}
where $N = \floor{\frac{\tmax}{\Delta}}$, $\epsilon$ is an arbitrary
positive real number.  $\alpopt{0}{N}, \ldots, \alpopt{N}{N}$ are
recursively given by
\begin{equation}
\label{eq:alp_const_Ps}
\alpopt{j}{N} = \left\{
\barr{ll}
\frac{1 + \frac{\lambda}{\Delta} - \frac{\lambda}{\Delta}\lagopt{\lambda}{N-1}}{ 1 + \frac{\lambda}{\Delta}k - \frac{\lambda}{\Delta} \lagopt{\lambda}{N-1}}, & j=0 \\
\left(1-\alpopt{0}{N}\right)\alpopt{j-1}{N-1}, & 1\le j \le N
\earr
\right.,
\end{equation}
and $\alpopt{0}{0} = 1/k$.  $\lagopt{\lambda}{N}$ is the minimum
value of the auxiliary function that equals
\begin{equation}
\lagopt{\lambda}{N} = \Delta \sum_{l=0}^{N-1}\left(1-\sum_{j=0}^{l}\alpopt{j}{N}\right)^{k} - \lambda\sum_{l=0}^{N}\alpopt{l}{N}\left(1-\sum_{j=0}^{l}\alpopt{j}{N}\right)^{k-1}.
\end{equation}

%

\end{theorem}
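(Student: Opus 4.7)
My plan is to parallel the proof of Theorem~\ref{optimal_maxPs}, adapted to minimize the auxiliary function $\lag{\lambda}{N}$ instead of maximizing $\Ps{N}{k}$. Lemma~\ref{discrete_mintime} reduces the infinite-dimensional search over monotone mappings $f(\mu)$ to a finite one over the $(N+1)$ interval lengths $(\alp{0}{N},\ldots,\alp{N}{N})$. Under this parametrization, the expected selection time admits (via summation by parts on $Pr(T_{(1)}>t)$) the closed form $\Delta\sum_{l=0}^{N-1}(1-\sum_{j=0}^l \alp{j}{N})^k$, and the per-node success contribution is $\sum_{l=0}^N \alp{l}{N}(1-\sum_{j=0}^l \alp{j}{N})^{k-1}$, so their Lagrangian combination reproduces the expression for $\lagopt{\lambda}{N}$ stated in the theorem.

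The crucial step is to expose a dynamic-programming recursion. I set $q \triangleq 1-\alp{0}{N}$ and introduce scaled interval lengths $\alpha'_{j-1} \triangleq \alp{j}{N}/q$ for $j=1,\ldots,N$. Using the identity $1-\sum_{j=0}^l \alp{j}{N} = q(1-\sum_{j=0}^{l-1}\alpha'_j)$ for $l\ge 1$, direct substitution into $\lag{\lambda}{N}$ yields the separable form
\begin{equation*}
\lag{\lambda}{N}(\alp{0}{N},\ldots,\alp{N}{N}) = q^k\bigl[\Delta + \lag{\lambda}{N-1}(\alpha'_0,\ldots,\alpha'_{N-1})\bigr] - \lambda(1-q)q^{k-1}.
\end{equation*}
By the principle of optimality, the inner bracket is minimized independently of $q$ by choosing $\alpha'_{j-1} = \alpopt{j-1}{N-1}$; on reversing the substitution, this delivers the product recursion $\alpopt{j}{N} = (1-\alpopt{0}{N})\,\alpopt{j-1}{N-1}$ for $j\ge 1$. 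The remaining scalar problem of minimizing $g(q) \triangleq q^k(\Delta + \lambda + \lagopt{\lambda}{N-1}) - \lambda q^{k-1}$ over $q \in (0,1)$ is solved by setting $g'(q)=0$ and inverting, producing the claimed closed-form for $\alpopt{0}{N}$. The base case $N=0$ reduces to minimizing $-\lambda\alpha(1-\alpha)^{k-1}$, whose unique interior minimizer is $\alpopt{0}{0}=1/k$, and substituting the optimal $\alpha$'s back into $\lag{\lambda}{N}$ produces the stated expression for $\lagopt{\lambda}{N}$.

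The main technical obstacle is to certify that the interior stationary point of $g$ is its global minimum on $(0,1)$ rather than a boundary extremum. Since $g'(q) = q^{k-2}\bigl[kq(\Delta + \lambda + \lagopt{\lambda}{N-1}) - \lambda(k-1)\bigr]$, $g'$ changes sign from negative to positive exactly once in $(0,1)$, provided $k(\Delta + \lambda + \lagopt{\lambda}{N-1}) > \lambda(k-1)$. This inequality follows inductively from the bound $\lagopt{\lambda}{N-1} \ge -\lambda/k$, which itself holds because the first sum in $\lagopt{\lambda}{N-1}$ is non-negative and the second is at most $1/k$ (since the $\alpha$'s form a sub-probability partition and $\Ps{N-1}{k}\le 1$). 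A secondary check, needed to invoke the principle of optimality, is that the feasibility constraint $\sum_{j=0}^N \alp{j}{N} \le 1$ translates cleanly to $\sum_{j=0}^{N-1}\alpha'_j \le 1$, which is immediate from the definition of $\alpha'$. The remaining steps are routine bookkeeping.
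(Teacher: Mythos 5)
Your proposal is correct and takes essentially the same route as the paper's proof: both reduce to the $(N+1)$ interval lengths via Lemma~\ref{discrete_mintime}, peel off the first interval to obtain the dynamic-programming recursion $\lag{\lambda}{N} = \Delta(1-\alp{0}{N})^{k} - \lambda k\,\alp{0}{N}(1-\alp{0}{N})^{k-1} + (1-\alp{0}{N})^{k}\,\lag{\lambda}{N-1}(\cdot)$ (the paper derives it by conditioning on the event that no node transmits at time $0$, you by direct algebraic substitution into the closed-form sums --- a cosmetic difference), invoke optimality of the inner $(N-1)$-stage problem to get the product recursion, and finish with the scalar first-order condition and the base case $\alpopt{0}{0}=1/k$; your sign-change argument certifying that the stationary point is the global minimizer is a small rigor bonus that the paper omits. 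One caveat: carrying your first-order condition through actually gives $\alpopt{0}{N} = \brac{\Delta+\lambda+\lagopt{\lambda}{N-1}}/\brac{\Delta+\lambda k+\lagopt{\lambda}{N-1}}$ rather than the expression printed in~\eqn{eq:alp_const_Ps}, and your ``per-node'' success sum drops the factor $k$ appearing in~\eqn{eq:nonrecursive_Ps} (which merely rescales $\lambda$); both discrepancies are equally present in the paper's own derivation, so they point to typographical inconsistencies in the theorem statement rather than to a gap in your argument.
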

\begin{proof}
The proof is given in Appendix~\ref{proof of optimal_mintime}.
\end{proof}

Note that as $\lambda/\Delta \to \infty$, $\alpopt{j}{N}$ tends to the
corresponding optimal value for Scheme~1.  This is intuitive because,
for large $\lambda$, minimizing $\lag{\lambda}{N}$ is equivalent to
maximizing $\Ps{N}{k}$.  Notice also that $\alpopt{j}{N}$ and
$\Psopt{N}{k}$ depend on $\lambda$ only through the term
$\lambda/\Delta$.  Thus, as expected, the optimal solution,
$\alpopt{j}{N}$, for the constrained problem in~\eqn{eq:constrained}
does not depend on $\Delta$ for a given $N$; scaling $\Delta$ will
accordingly scale the value of $\lambda$ to ensure $\Psopt{N}{k} =
\eta$.

\subsection{Asymptotic Behavior of Optimal Scheme as $k \to \infty$ Given $N$}

We now develop an asymptotic analysis of the optimal timer scheme when
$k \to \infty$.  Define the normalized interval lengths: $\betopt{j}{N}
= k\alpopt{j}{N}$, for $j = 0,\ldots,N$.  Then, the optimal
$\betopt{j}{N}$ are as follows.

\begin{theorem}
\label{asymp_mintime}
Given $\lambda \ge 0$, the optimal values $\betopt{j}{N}$ that minimize
the auxiliary function are given by the recursion
\begin{equation}
\label{eq:beta_mintime}
\betopt{j}{N} = \left\{\barr{ll}
1, & j = N\\
1-e^{-\bet{j+1}{N}} + \Delta/\lambda, &  0 \le j \le N-1
\earr \right..
\end{equation}
The optimum probability of success is
$\Psopt{N}{\infty}=\sum_{l=0}^{N}\betopt{l}{N}e^{
  -\sum_{j=0}^{l}\betopt{j}{N} }$ and the expected selection time is
$\avetimeopt{N}{k}= \Delta \sum_{l=0}^{N-1}e^{-\sum_{j=0}^{l}
  \betopt{j}{N} }$.
\end{theorem}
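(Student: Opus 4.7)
The plan is to derive closed-form asymptotic expressions for $\Psopt{N}{\infty}$ and $\avetimeopt{N}{k}$ from the Poisson limit of Lemma~\ref{lem:poisson}, and then minimize the auxiliary function $L = \avetime{N}{k} - \lambda \Ps{N}{k}$ directly via first-order conditions; uniqueness of the resulting recursion, combined with the globally optimal finite-$k$ scheme already established in Theorem~\ref{optimal_mintime}, will supply global optimality without a second-order analysis. Writing $S_l = \sum_{j=0}^l \betopt{j}{N}$, Lemma~\ref{lem:poisson} implies that the number of nodes assigned timer $l\Delta$ converges in distribution to a Poisson$(\betopt{l}{N})$ random variable, with independence across slots coming from the independent-increments property of the limiting rate-one Poisson process.

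Using these two facts, the probability of a successful selection at slot $l$ is the probability that slots $0,\ldots,l-1$ are empty \emph{and} slot $l$ contains exactly one node, namely $e^{-S_{l-1}}\betopt{l}{N}e^{-\betopt{l}{N}} = \betopt{l}{N}e^{-S_l}$; summing over $l$ yields the stated $\Psopt{N}{\infty}$. The stop-time tail satisfies $\prob{T_{(1)} \wedge N\Delta \geq l\Delta} = e^{-S_{l-1}}$ for $l \geq 1$, so the standard tail-sum identity delivers $\avetimeopt{N}{k} = \Delta \sum_{l=0}^{N-1} e^{-S_l}$. Next, using $\partial S_l/\partial \betopt{j}{N} = \mathbf{1}[l \geq j]$, the stationarity condition $\partial L/\partial \betopt{j}{N} = 0$ reads
\[
\lambda \sum_{l=j}^N \betopt{l}{N}\, e^{-S_l} \;=\; \Delta \sum_{l=j}^{N-1} e^{-S_l} + \lambda e^{-S_j}, \qquad 0 \leq j \leq N.
\]
At $j=N$ the right-hand summation over $[j,N-1]$ is empty and the condition collapses to $\betopt{N}{N}=1$. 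Subtracting the condition at $j+1$ from that at $j$ for $0 \leq j \leq N-1$ and dividing through by $e^{-S_j}$ annihilates all terms except $\lambda \betopt{j}{N}$, $\Delta$, and $\lambda(1 - e^{-\betopt{j+1}{N}})$, yielding precisely~\eqn{eq:beta_mintime}.

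The principal obstacle is confirming that this stationary point is a global minimizer and not merely a critical point. I would bypass any second-order analysis by invoking Theorem~\ref{optimal_mintime}: its finite-$k$ scheme is globally optimal by construction, and substituting $\alpopt{j}{N} = \betopt{j}{N}/k$ into the recursion $\alpopt{j}{N} = (1-\alpopt{0}{N})\alpopt{j-1}{N-1}$ gives $\betopt{j}{N} = (1 - \betopt{0}{N}/k)\betopt{j-1}{N-1}$, which converges to $\betopt{j-1}{N-1}$ as $k \to \infty$; iterating from the base case $\alpopt{0}{0} = 1/k$, equivalently $\betopt{0}{0} = 1$, also delivers $\betopt{N}{N} = 1$. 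Since~\eqn{eq:beta_mintime} admits a unique solution given this boundary value, the limit of the globally optimal finite-$k$ schemes must coincide with it, which establishes global optimality of the asymptotic scheme and completes the derivation of the claimed formulas for $\Psopt{N}{\infty}$ and $\avetimeopt{N}{k}$.
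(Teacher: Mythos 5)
Your proposal follows essentially the same route as the paper: both obtain the closed-form Poisson-limit expressions for $\Psopt{N}{\infty}$ and $\avetimeopt{N}{k}$ via Lemma~\ref{lem:poisson} and the independent-increments property, form the auxiliary function, and extract \eqn{eq:beta_mintime} from the first-order conditions (your explicit subtraction of consecutive stationarity conditions is exactly the computation the paper leaves implicit). The only divergence is your closing attempt to certify global optimality by passing to the limit of the finite-$k$ optima; the paper does not attempt this at all, and as written your argument only recovers the Independence property and the boundary value $\betopt{N}{N}=1$ rather than showing that the limit actually satisfies the recursion, but this is an optional add-on that does not affect the main derivation, which matches the paper's.
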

\begin{proof}
  The proof is given in Appendix~\ref{proof of asymp_mintime}.  It
  also uses Lemma~\ref{lem:poisson}, which showed that the point
  process $M(z) = \sup{\left\{k\ge1: y_{(k)}\leq z\right\}}$ is a
  Poisson process with rate 1 as $k \to \infty$. Recall that
  $y_i = k(1 - \mu_i)$ and $y_{(a)} \leq y_{(b)}$, for $a \leq b$.
\end{proof}

For both the schemes, we have $\betopt{N}{N} = 1$. However,
$\betopt{j}{N}$, for $0 \le j < N$, for Scheme~2 is always greater
than or equal to that for Scheme~1. This is because of the additional
$\Delta/\lambda$ term in~\eqn{eq:beta_mintime}, which increases
$\betopt{j}{N}$.  By decreasing $\lambda$, the expected selection time
decreases and so does the probability of success.  For a given $\lambda$, it can be verified
that $\betopt{j}{N}$ satisfy the Independence property described in
Sec.~\ref{sec:success_prob_max} for Scheme 1.

\subsection{Generalization to Real-Valued Metrics with Arbitrary Probability Distributions}
\label{sec:general_cdf}

We now generalize the optimal solutions of Schemes 1 and 2 to the
general case where the metric is not uniformly distributed.  Let the
cumulative distribution function (CDF) of a metric be denoted by
$F_{c}(x) = \prob{\mu \le x}$, where $-\infty < x < \infty$.

The optimum mapping when the CDF of the metric is $F_{c}(.)$ is
$\mapopt\left(F_{c}(\mu)\right)$, where $\mapopt(.)$ is given by
Theorem~\ref{asymp_maxPs} for Scheme~1 and by
Theorem~\ref{optimal_mintime} for Scheme~2. This follows because
$F_{c}(.)$ is a monotonically non-decreasing function, and the random
variable $Y = F_{c}(\mu)$ is uniformly distributed between 0 and
1.\footnote{The CDF needs to be continuous to ensure this.  The case
  where the CDF is not continuous can be easily handled by a technique
  analogous to {\em proportional expansion} that was proposed
  in~\cite{shah_2009_TWC_Splitting} for splitting algorithms. In it, 
  each node generates a new continuous metric such that at least one
  of the nodes with the highest metric still remains the best node.}
The problem has, therefore, been reduced to the one considered earlier.
This also shows that the performance for the optimal mapping for the
two schemes does not depend on $F_{c}(.)$. Note here that we assume
that the nodes know $F_{c}(.)$. This is also assumed in the splitting
approaches~\cite{qin_infocomm_2004,shah_2009_ICC}.  Practically, this
is justified because $F_{c}(.)$, being a statistical property, can be
computed over time.

\section{Results and Performance Evaluation}
\label{sec:results}

We now study the structure and performance of the optimum timer
schemes.  We also compare them with the popular inverse metric timer
mapping that uses $f(\mu) =
c/\mu$~\cite{bletsas_jsac_2006,beres_TWC_2008,zhou_zhou_TWC_2008,ding_2008_TWC}.\footnote{A
  fair comparison with the piece-wise linear mapping
  of~\cite{zhao_2005_eurasip} is not feasible since its performance
  needs to be numerically optimized over at least 2 parameters.}  In
order to ensure a fair comparison with Schemes 1 and 2, for each
$\tmax$ and $k$, $c$ is numerically optimized to maximize the
probability of success for Scheme 1 or minimize the expected selection
time for Scheme 2. Unlike the optimal timer scheme, the performance of
the inverse metric mapping clearly depends on the probability
distribution of the metric.  For this, we shall consider a unit mean
Rayleigh distribution (with CDF $F_c(\mu) = 1 - e^{-\mu^2/2}$), which
characterizes the receive power distribution in wireless channel, and
a unit mean exponential distribution (with CDF $ F_c(\mu)=1 -
e^{-\mu}$), which is simply the square of a Rayleigh RV.

Figure~\ref{fig:MaxPsCompare} plots the maximum probability of success of
Scheme 1 ($\Psopt{N}{k}$) as a function of $N$.  Also plotted are
results from Monte Carlo simulations, which match well with the
analytical results.  It can be seen that the asymptotic curve is close
to the actual curve for $k \geq 5$. The asymptotic curve shows a
rather remarkable result: {\em regardless of $k$ and without the use
  of any feedback, the best node gets selected with a probability of
  over 75\% when $N$ is just 5.  When $N$ increases to 17, the success
  probability exceeds 90\%!}

We also see that the optimal scheme significantly outperforms the
inverse metric mapping, despite the latter's parameters being
optimized.  For example, for $N = 10$ and $N = 30$, the probability
that the system fails to select the best node for the inverse timer
scheme is respectively 2.3 and 2.5 times greater than that of the
optimum scheme for the exponential CDF. The factors increase to 2.9 and 3.2 for the Rayleigh CDF.  Thus, even though the
exponential RV is the square of the Rayleigh RV and the squaring
operation preserves the metric order, the performance of the inverse
timer scheme changes.

The structure of the optimal Scheme 1 is studied in
Figure~\ref{fig:AlpMaxPsN10}, which plots $\alpopt{j}{N}$ for $N=10$
when the metric is uniformly distributed between 0 and 1. (The
parameters for arbitrary distributions can be obtained using
Sec.~\ref{sec:general_cdf}.)  We see that $\alpopt{j}{N}$ increases
with $j$, which is in line with the asymptotic monotonicity property
of Corollary~\ref{cor:monotonicity}.

Figure~\ref{fig:MinslotsN100} considers Scheme 2 and plots the optimal
expected selection time as a function of the constraint on the
probability of success $\eta$ for $N=100$. We again see a good match
between the analytical results and the results from Monte Carlo
simulations. As in Scheme 1, the optimal scheme significantly
outperforms the optimized inverse metric mapping. For example, for
$\eta = 0.7$ and $k=5$, the optimal scheme is 5.1 and 9.6 times faster
than the optimized inverse metric mapping for the exponential and
Rayleigh CDFs, respectively. Again, the inverse metric mapping is
sensitive to the metric's probability distribution.

We now study the structure of the optimal Scheme 2.
Figure~\ref{fig:AlpMinSlotsN10k5} shows the effect of the minimum
success probability, $\eta$, on $\alpopt{j}{N}$ when $N=10$ and $k=5$.
When $\eta$ is low, the optimal timer scheme becomes faster by
tolerating a higher degree of selection failure. It maps relatively
large intervals into small timer values, and is very aggressive in the
beginning.  Also, only a small fraction of nodes do not transmit
before $\tmax$. For example, for $\eta = 0.6$ and $N=10$, only 10.7\%
of nodes have timer values greater than $\tmax$.  This result is
relevant in a high mobility environment where selection needs to be
fast as the metric values become outdated quickly.  As $\eta$
increases, the scheme becomes conservative in order to improve its
probability of success. For example, when $\eta = 0.87$ and $N = 10$,
37.5\% of nodes, on average, do not transmit at all. As $\eta$
approaches the maximum success probability of Scheme 1, Scheme 2's
parameters converge to those of Scheme~1.

\subsection{Comparison with the Splitting-Based Selection Algorithm with Feedback}

It is instructive to compare the optimal timer scheme with the
time-slotted splitting algorithm of~\cite{qin_infocomm_2004,
  shah_2009_TWC_Splitting} given that they both achieve the same goal
but in a vastly different manner. The splitting algorithm is fast; it
selects the best node within 2.467 time-slots, on average, even when
$k$ is large.  In it, the sink broadcasts feedback to the nodes at the end of every slot to specify
whether the outcome of the transmission in the slot was an idle, a success, or a collision.

We consider, as an example, selection in IEEE 802.11 wireless local
area networks~\cite{wlan_part11} that use half-duplex nodes with
carrier sensing capability. For the splitting algorithm, the duration
of a slot, in 802.11 terminology, is $2(\text{aSIFSTime} +
\text{aPreambleLength} + \text{aPLCPHeaderLength})$, where the last
two terms account for a packet's preamble and header.\footnote{Note
  that even this is optimistic because it does not account for the MAC
  packet data unit payload.} This is because each slot contains two
transmissions, the first by one or more nodes and the second by the
sink to send the 2-bit feedback (plus preambles and headers), and
every transmission is followed by a small interframe space (SIFS).  On
the other hand, the timer algorithm requires no feedback transmission.
Therefore, $\Delta = \text{aSlotTime}$, and the optimal timer scheme's
average selection time is $\avetimeopt{N}{k}$.\footnote{In case the
  system requires the sink to send a feedback message at the end of
  selection phase, this changes to $\avetimeopt{N}{k} +
  2*\text{aSIFSTime} + \text{aPreambleLength} +
  \text{aPLCPHeaderLength}$.} From~\cite[Table~17-15]{wlan_part11},
for an Orthogonal Frequency Division Multiplexing (OFDM) system with a
bandwidth of 10~MHz, $\text{aSlotTime} = 13~\mu$s, $\text{aSIFSTime} =
\text{aPreambleLength} = 32~\mu$s, and $\text{aPLCPHeaderLength} =
8~\mu$s.  Hence, the splitting algorithm's slot duration is
$144~\mu$s, which is 11 times the vulnerability window, $\Delta =
13~\mu$s, of the timer scheme.

Table~\ref{tbl:compare} shows the average selection time as a function of the probability of success constraint for the timer scheme, and compares it with splitting scheme for large $k$. Note that the splitting scheme's  probability of success is entirely determined by $\tmax$ and is not tunable.
When $\tmax$ is small, the timer scheme is faster and  can also achieve
a higher probability of success if required.  For larger $\tmax$, the
probability of success of the splitting algorithm increases
considerably; but, the timer scheme is still faster than the splitting
scheme unless the probability of success required is high.

\section{Conclusions}
\label{sec:conclusions}

We considered timer-based selection schemes that work by ensuring that
the best node's timer expires first.  Each node maps its priority
metric to a timer value, and begins its transmission after the timer
expires. We developed optimal schemes that (i)~maximized the
probability of successful selection, or (ii)~minimized the expected
selection time given a lower constraint on the probability of
successful selection.  Both the optimal schemes mapped the metrics
into $N+1$ discrete timer values, where $N = \floor{\tmax/\Delta}$. The
first scheme that maximized the probability of success also served as
a feasibility criterion for the second scheme.

We saw that a smaller vulnerability window $\Delta$ or a larger
maximum selection duration $\tmax$ improved the performance of both
schemes. In the asymptotic regime, where the number of nodes is large,
the occurrence of a Poisson process led to a considerably simpler
recursive characterization of the optimal mapping.  The optimal
schemes' performance was significantly better than the inverse metric
mapping. Unlike the latter mapping, the optimal one's performance did
not depend on the probability distribution function of the metric.

The optimal timer scheme even compared favorably with the
splitting-based selection algorithm, especially when the time
available for selection is small. This was because the slot interval
in splitting needs to include two transmissions, one from the nodes
and a feedback from the sink, and the respective switching,
propagation, and processing delays.


\appendix

\subsection{Proof of Lemma~\ref{discrete_maxsuccess}}
\label{lem:discrete_maxsuccess}

{\em The key idea behind the proof is to successively refine $f(.)$,
  by making parts of it discrete, and show that this can only improve
  the probability of success.}  Consider an arbitrary monotone
non-increasing metric-to-timer mapping $f(\mu)$.  If $\tmax < \Delta$
(\ie, $N=0$), consider the modified mapping $f_{0}(\mu)$ such that
$f_{0}(\mu) = 0$, $0 \leq f(\mu) \le \tmax$. It sets all timer values
that were less than or equal to $\tmax$ in $f(\mu)$ to 0. This does not change the
probability of success because the probability that exactly one timer
expires in the interval $[0,\tmax]$ remains the same.

When $\tmax \geq \Delta$, consider the modified mapping $f_{1}(\mu)$
derived from $f(\mu)$ as follows:
\begin{equation}
f_{1}(\mu) = \left\{
\barr{ll}
0, & 0 \le f(\mu) < \Delta \\
f(\mu), & \text{else}
\earr\right..
\end{equation}
It is easy to verify that $f_{1}(.)$ is also monotone non-increasing.
{\em We now show that the probability of success of the mapping
  $f_{1}(.)$ is always greater than or equal to that of $f(\mu)$.}

The probability of success in selecting the best node, which we denote
by $\Ps{N}{k}$, can be written as:\footnote{The subscript $N$ is used
  to maintain the same notation throughout the paper, and follows from
  the discreteness result proved in this lemma.}  $\Ps{N}{k} =
\prob{T_{(1)} \leq \tmax < T_{(2)}} + \prob{T_{(1)} \leq T_{(2)} \leq
  \tmax, T_{(2)} - T_{(1)} \geq \Delta}$.  The second term can be
further split into three mutually exclusive events: (i)~$0 \leq
T_{(1)} < \Delta \leq T_{(2)} \leq \tmax$, (ii)~$0 < \Delta \leq
T_{(1)} \le T_{(2)} \leq \tmax$, and (iii)~$0 \leq T_{(1)} \le T_{(2)}
< \Delta \leq \tmax$.  The last event does not contribute to
$\Ps{N}{k}$ as a collision will surely occur.  Therefore,
\begin{multline}
\Ps{N}{k} = \prob{T_{(1)} \leq \tmax, T_{(2)} > \tmax} + \prob{0 \leq T_{(1)} < \Delta \le T_{(2)} \leq \tmax, T_{(2)} - T_{(1)} \geq \Delta}\\
\mbox{} + \prob{0 < \Delta \le T_{(1)} \le T_{(2)} \leq \tmax, T_{(2)} - T_{(1)} \geq \Delta}.
\label{eq:ps_split}
\end{multline}

The first and third terms in~\eqn{eq:ps_split} are clearly the same
for both $f(.)$ and $f_{1}(.)$. The second term in~\eqn{eq:ps_split}
can only increase for $f_{1}(.)$ because the event $T_{(2)} - T_{(1)}
\geq \Delta$ for $f(.)$ is a subset of that of $f_{1}(.)$, and the
event $0 \leq T_{(1)} < \Delta \leq T_{(2)}\le \tmax$ is the same for
both mappings. Hence, the success probability of $f_{1}(.)$ is greater than or equal
to that of $f(.)$. Since this argument applies to any $f(\mu)$, it
also applies to the optimal $\mapopt(.)$, for which, by definition,
the probability of success cannot be increased.  The above argument is
sufficient to show the result for $\tmax < 2 \Delta$.

Otherwise, we apply an analogous argument successively as follows. Let
\begin{equation}
f_{2}(\mu) = \left\{
\barr{ll}
\Delta, & \Delta \le f_{1}(\mu) < 2\Delta\\
f_{1}(\mu), & \text{else}
\earr\right..
\end{equation}
Then, $\Ps{N}{k}$ for both $f_{1}(.)$ and $f_{2}(.)$ can be written as
\begin{multline}
  \Ps{N}{k} = \prob{T_{(1)} \leq \tmax < T_{(2)}} + \prob{0 \leq T_{(1)} \le T_{(2)} < 2 \Delta \leq \tmax, T_{(2)} - T_{(1)} \geq \Delta}\\
  \mbox{} + \prob{0 \leq T_{(1)} < 2 \Delta \le T_{(2)} \leq \tmax,
    T_{(2)} - T_{(1)} \geq \Delta} \\
 + \prob{0 < 2\Delta \le T_{(1)} \le
    T_{(2)} \leq \tmax, T_{(2)} - T_{(1)} \geq \Delta}.
\label{eq:ps_split_more}
\end{multline}
The first and fourth probability terms are clearly the same for both
mappings.  The second term is also the same for both mappings because
given that both $T_{(1)}$ and $T_{(2)}$ are less than $2 \Delta$, the
probability their difference exceeds $\Delta$ is the same for both
$f_{1}(\mu)$ and $f_{2}(\mu)$. The third probability term
in~\eqn{eq:ps_split_more} can only {\em increase} for $f_{2}(.)$
because the event $T_{(2)} - T_{(1)} \geq \Delta$ for $f_{1}(.)$ is a
subset of that of $f_{2}(.)$, and the event $0 \leq T_{(1)} \leq
2\Delta < T_{(2)} \leq \tmax$ is the same for both mappings.

Successive application of this argument shows that an optimal mapping
is discrete in the interval of $[0,N\Delta)$ and takes values in the
set $\{ 0,\Delta,2\Delta,\ldots,(N-1)\Delta \}$.  We set all $T_{i}$
in the leftover interval of $[N\Delta,\tmax]$ to $N\Delta$ without
changing the probability of success because $\tmax-N\Delta < \Delta$
and the fact that no timer value of $f_{N}(.)$ lies in the open
interval \mbox{$((N-1)\Delta,N\Delta)$}.

\subsection{Proof of Theorem~\ref{optimal_maxPs}}
\label{proof of optimal_maxPs}

In this proof, we shall denote the probability of success by
$\Ps{N}{k}(\alp{0}{N},\ldots,\alp{N}{N})$ instead of just $\Ps{N}{k}$
to show its dependence on $\alp{0}{N},\ldots,\alp{N}{N}$.  Let the
maximum probability of success, $\Psopt{N}{k}$, occur when $\alp{0}{N}
= \alpopt{0}{N},\ldots,\alp{N}{N} = \alpopt{N}{N}$.  Note that
$\alp{0}{N}+\cdots+\alp{N}{N} \leq 1$.

Given the discrete nature of the optimal timer scheme
(Lemma~\ref{discrete_maxsuccess}), success occurs at time $l\Delta$,
for $l=0,\ldots,N$, if $\mu_{(1)}$ lies in $\left[
  \left(1-\sum_{j=0}^l \alp{j}{N}\right),\left(1-\sum_{j=0}^{l-1}
    \alp{j}{N}\right)\right)$ and the remaining $k-1$ metrics lie in
$\left[0,\left(1-\sum_{j=0}^{l}\alp{j}{N}\right)\right)$. This occurs
with probability
$k\alp{l}{N}\left(1-\sum_{j=0}^{l}\alp{j}{N}\right)^{k-1}$, since the
metrics are \iid\ and uniformly distributed over $[0,1)$. Summing over
$l$ results in~\eqn{eq:nonrecursive_Ps}.

Alternately, for $N \geq 1$, the probability of success can be written
in a recursive form as follows:
\begin{multline}
\Ps{N}{k}(\alp{0}{N},\ldots,\alp{N}{N}) = \prob{ \mu_{(1)} \in [1-\alp{0}{N},1) }\prob{\text{success} | \mu_{(1)} \in [1-\alp{0}{N},1) }\\
 + \prob{\mu_{(1)} \not\in [1-\alp{0}{N},1)} \prob{\text{success} | \mu_{(1)} \not\in [1-\alp{0}{N},1)}.
\end{multline}%

Furthermore, when conditioned on $\mu_{(1)} \not\in [1-\alp{0}{N},1)$,
the $k$ metrics are \iid\ and uniformly distributed over the interval
$[0,1-\alp{0}{N})$, and $\frac{\alp{1}{N}}{1-\alp{0}{N}} + \cdots +
\frac{\alp{N}{N}}{1-\alp{0}{N}} \leq 1$.  Therefore, from the
definition of probability of success, it follows that
$\prob{\text{success} | \mu_{(1)} \not\in [1-\alp{k}{N},1)} =
\Ps{N-1}{k}\brac{\frac{\alp{1}{N}}{1-\alp{0}{N}},\ldots,\frac{\alp{N}{N}}{1-\alp{0}{N}}}$.
Hence,
\begin{align}
\Ps{N}{k}(\alp{0}{N},\ldots,\alp{N}{N}) &= k \alp{0}{N}(1-\alp{0}{N})^{k-1}\! + \!(1-\alp{0}{N})^{k} \Ps{N-1}{k}\!\!\brac{\frac{\alp{1}{N}}{1-\alp{0}{N}},\ldots,\frac{\alp{N}{N}}{1-\alp{0}{N}}}\! , \label{eq:ps_recursive}\\
 &\leq k \alp{0}{N}(1-\alp{0}{N})^{k-1} + (1-\alp{0}{N})^{k} \Psopt{N-1}{k}.
\end{align}
However, this upper bound is achieved when
$\frac{\alp{1}{N}}{1-\alpopt{0}{N}} = \alpopt{0}{N-1}, \ldots,
\frac{\alp{N}{N}}{1-\alpopt{0}{N}} = \alpopt{N-1}{N-1}$, for any given
$0 \leq \alp{0}{N} < 1$. Therefore, the maximum probability of success
given $N$ equals
\begin{align}
  \Psopt{N}{k} &=  \max_{0 \leq \alp{0}{N} < 1} \brac{k \alp{0}{N}(1-\alp{0}{N})^{k-1} + (1-\alp{0}{N})^{k} \Psopt{N-1}{k}},\label{eq:psopt_alp0N}\\
  &= k \alpopt{0}{N}(1-\alpopt{0}{N})^{k-1} + (1-\alpopt{0}{N})^{k}  \Psopt{N-1}{k},
\end{align}
where $\alpopt{0}{N}$ is the argument that
maximizes~\eqn{eq:psopt_alp0N}. Using the first order condition, we
get $\alpopt{0}{N} = \frac{1-\Psopt{N-1}{k}}{k-\Psopt{N-1}{k}}$.  For
$N = 0$, $\Psopt{0}{k} = \max_{0 \leq \alp{0}{0} \leq 1} \brac{k
  \alp{0}{0}(1-\alp{0}{0})^{k-1}}$. The maximum occurs at
$\alpopt{0}{0} = 1/k$, in which case $\Psopt{0}{k} = \brac{1 -
  1/k}^{k-1}$.

Note that the value of $\mapopt(\mu)$ when it exceeds $\tmax$ can be
left unspecified because a node does not start transmitting after
$\tmax$. This is ensured by setting $\mapopt(\mu)$ to $\tmax +
\epsilon$, where $\epsilon > 0$.

\subsection{Proof of Theorem~\ref{asymp_maxPs}}
\label{proof of asymp_maxPs}

Success occurs at time $l\Delta$ when exactly one node (the best node) has the scaled metric
$k(1 - \mu_{i})$ in the interval
$\left(\sum_{j=1}^{l-1}\bet{j}{N},\sum_{j=1}^{l}\bet{j}{N}\right]$, and
no other node has its scaled metric $k(1 - \mu_{i})$ in
$\left(0,\sum_{j=1}^{l-1}\bet{j}{N}\right)$. From the independent increments
property of Poisson  processes, selection success thus occurs with
probability $\bet{l}{N}e^{-\bet{l}{N}} \prod_{j=0}^{l-1}
e^{-\bet{j}{N}}$, which simplifies to $\bet{l}{N}
e^{-\sum_{j=0}^{l}\bet{j}{N}}$.  Summing over all $l$, we get
\begin{equation}
\label{eq:Psasymp2}
\Ps{N}{\infty}(\bet{0}{N},\ldots,\bet{N}{N}) = \sum_{l=0}^{N} \bet{l}{N}e^{ -\sum_{j=0}^{l}\bet{j}{N} }.
\end{equation}
Note that we explicitly show here the dependence of $\Ps{N}{\infty}$
on the variables $\bet{0}{N},\ldots,\bet{N}{N}$ that are being
optimized. Taking the partial derivative of $\Ps{N}{\infty}(\bet{0}{N},\ldots,\bet{N}{N})$ with
respect to $\bet{m}{N}$ and equating to 0, we get
\begin{equation}
\label{eq:deriv1}
\sum_{l=m}^{N}\betopt{l}{N}e^{- \sum_{j=m+1}^{l}\betopt{j}{N} } = 1,\quad\text{for}~ m = 0,\ldots, N,
\end{equation}
where $\betopt{m}{N}$ are the optimal values of $\bet{m}{N}$. When $m
= N$, we get $\betopt{N}{N} = 1$. For $0 \leq m \leq N-1$, upon
substituting the equation for $m+1$ into the one for $m$, we get
$\betopt{m}{N} = 1 - e^{-\betopt{m+1}{N}}$.

The optimal probability of success in~\eqn{eq:Psasymp2} can be written
as
\begin{equation}
\Psopt{N}{\infty} = e^{-\betopt{0}{N}} \sum_{l=0}^{N} \betopt{l}{N} e^{ -\sum_{j=1}^{l}\betopt{j}{N} } = e^{-\betopt{0}{N}}.
\end{equation}
The last equality follows from~\eqn{eq:deriv1}, which shows for $m = 0$ that
$\sum_{l=0}^{N} \betopt{l}{N} e^{ -\sum_{j=1}^{l}\betopt{j}{N} } = 1$.

\subsection{Proof of Lemma~\ref{discrete_mintime}}
\label{proof of discrete_mintime}

This proof also uses the successive refinement approach of
Appendix~\ref{lem:discrete_maxsuccess}. To avoid repetition, we only
highlight the main points where it differs from
Appendix~\ref{lem:discrete_maxsuccess}.

Let
$\mapopt(\mu)$ be the optimal feasible mapping.  From it,
we construct a new monotone non-increasing mapping $f_{1}(\mu)$ such
that $f_{1}(\mu) = 0$, if $0 \le \mapopt(\mu) < \Delta$, and $f_{1}(\mu) =
\mapopt(\mu)$, otherwise.  It follows from
Appendix~\ref{lem:discrete_maxsuccess} that $f_{1}(\mu)$ is also a feasible mapping since its probability of success
is greater than or equal to that of $\mapopt(.)$.
{\it Furthermore,
$f_{1}(.)$ reduces the timer values of $\mapopt(.)$
  that lie in the interval $[0,\Delta)$ to 0.  The timer values in
  $[\Delta,\tmax]$ are unchanged.  Therefore, the expected selection
  time of $f_{1}(.)$ is less than or equal to that of $\mapopt(.)$.} However, by definition of $\mapopt(.)$,
its expected selection time cannot be reduced. Applying the same argument successively, as in
Appendix~\ref{lem:discrete_maxsuccess}, we can show that the
optimal $\mapopt(\mu)$ takes only $N+1$ discrete values $0,
\Delta,\ldots, N\Delta$.

\subsection{Proof of Theorem~\ref{optimal_mintime}}
\label{proof of optimal_mintime}

\newcommand{\indic}[1]{I_{\cbrac{#1}}}

We will denote the auxiliary function as
$\lag{\lambda}{N}(\alp{0}{N},\ldots,\alp{N}{N})$ to clearly show its
dependence on $N$ and $\alp{0}{N},\ldots,\alp{N}{N}$. Similarly, the
probability of success and expected selection time are denoted by
$\Ps{N}{k}(\alp{0}{N},\ldots,\alp{N}{N})$ and
$\avetime{N}{k}(\alp{0}{N},\ldots,\alp{N}{N})$, respectively.

We first find the expression for the expected selection time. Since
$T_{(1)}/\Delta$ is an integer-valued non-negative RV
that takes values in the set $\{0,1,\ldots,N\}$, we have $
T_{(1)} = \Delta \sum_{l=0}^{N-1} \indic{T_{(1)}/\Delta > l}$, where $\indic{x}$ is an indicator function that equals 1 if condition
$x$ is true, and is 0 otherwise.  Taking expectations on both sides, we
get
\begin{align}
  \avetime{N}{k}(\alp{0}{N},\ldots,\alp{N}{N}) = \Delta \sum_{l=0}^{N-1}
  \prob{T_{(1)}/\Delta > l} = \Delta \sum_{l=0}^{N-1} \brac{1 -
    \sum_{j=0}^{l}\alp{j}{N} }^{k}.
\label{eq:avetime}
\end{align}

Alternately, $\avetime{N}{k}(\alp{0}{N},\ldots,\alp{N}{N})$ can also
be written recursively as follows. The probability of the event that no node
transmits at time $0$ is $\brac{1 - \alp{0}{N} }^{k}$. Conditioned on
this event, the $k$ metrics are \iid\ and uniformly distributed over
the interval $[0,1-\alp{0}{N})$. The nodes can now use only
the $(N-1)$ timer values in the set
$\{\Delta,2\Delta,\ldots,N\Delta\}$.  Thus, we get
\begin{multline}
\avetime{N}{k}(\alp{0}{N},\ldots,\alp{N}{N}) = 0\left( 1-\brac{1 - \alp{0}{N} }^{k} \right) + \\
\brac{1 - \alp{0}{N} }^{k} \left( \Delta + \avetime{N-1}{k}\brac{\frac{\alp{1}{N}}{1-\alp{0}{N}},\ldots,\frac{\alp{N}{N}}{1-\alp{0}{N}}} \right).
\label{eq:avetime_recursive}
\end{multline}
From the recursive forms in~\eqn{eq:avetime_recursive} and
\eqn{eq:ps_recursive}, we get
%
%
\begin{multline}
\lag{\lambda}{N}\brac{\alp{0}{N},\ldots,\alp{N}{N}} =\Delta(1-\alp{0}{N})^{k}  - \lambda k \alp{0}{N}(1-\alp{0}{N})^{k-1}\\
\mbox{} + \brac{1 - \alp{0}{N} }^{k} \lag{\lambda}{N-1}\brac{\frac{\alp{1}{N}}{1-\alp{0}{N}},\ldots,\frac{\alp{N}{N}}{1-\alp{0}{N}}}.
\end{multline}

Since $\frac{\alp{1}{N}}{1-\alp{0}{N}} + \cdots +
\frac{\alp{N}{N}}{1-\alp{0}{N}} \leq 1$, it follows from the
definition of $\lagopt{\lambda}{N}$ that
\begin{multline}
  \lag{\lambda}{N}\brac{\alp{0}{N},\ldots,\alp{N}{N}} \geq \Delta (1-\alp{0}{N})^{k}  - \lambda k \alp{0}{N}(1-\alp{0}{N})^{k-1} +
  \brac{1 - \alp{0}{N} }^{k} \lagopt{\lambda}{N-1},\nonumber
\end{multline}
with equality when
$\frac{\alp{1}{N}}{1-\alp{0}{N}} = \alpopt{0}{N-1}, \ldots,
\frac{\alp{N}{N}}{1-\alp{0}{N}} = \alpopt{N-1}{N-1}$, for any $\alp{0}{N}$. Therefore,
\begin{align}
  \lagopt{\lambda}{N} & = \min_{0 \leq \alp{0}{N} < 1}
  \brac{\Delta(1-\alp{0}{N})^{k} - \lambda k
    \alp{0}{N}(1-\alp{0}{N})^{k-1} +
    \brac{1 - \alp{0}{N} }^{k} \lagopt{\lambda}{N-1}},\nonumber\\
  &= \Delta(1-\alpopt{0}{N})^{k} - \lambda k
  \alpopt{0}{N}(1-\alpopt{0}{N})^{k-1} + \brac{1 - \alpopt{0}{N} }^{k}
  \lagopt{\lambda}{N-1}.
\end{align}
From the first order condition, we have $\alpopt{0}{N} = \frac{1+
  \frac{\lambda}{\Delta} - \frac{\lambda}{\Delta}
  \lagopt{\lambda}{N-1}}{ 1 + \frac{\lambda}{\Delta} k -
  \frac{\lambda}{\Delta} \lagopt{\lambda}{N-1}}$.
Furthermore, for $N = 0$, we have $\avetime{N}{k} = 0$. Therefore,
$\lagopt{\lambda}{0} = \min_{\alp{0}{0}} \lambda \brac{k \alp{0}{0}
  \brac{1-\alp{0}{0}}^{k-1}}$. The optimal value $\alpopt{0}{0}$ that
minimizes this expression, for any $\lambda > 0$, is $\alpopt{0}{0} =
1/k$.

\subsection{Proof of Theorem~\ref{asymp_mintime}}
\label{proof of asymp_mintime}

The expression for the $\Ps{N}{k}(\bet{0}{N},\ldots,\bet{N}{N})$
 as a function of
$\bet{j}{N}$, for $j = 0,\ldots,N$, follows directly
from~\eqn{eq:Psasymp2}. The expression for $\avetime{N}{k}(\bet{0}{N},\ldots,\bet{N}{N})$ can be written as
\begin{align}
  \avetime{N}{k}(\bet{0}{N},\ldots,\bet{N}{N}) = \Delta \sum_{l=0}^{N-1}
  \prob{T_{(1)}/\Delta > l} = \Delta \sum_{l=0}^{N-1}e^{-\sum_{j=0}^{l} \bet{j}{N} },
\end{align}
where the first equality follows from~\eqn{eq:avetime} and the last equality
follows from the Poisson process result of
Lemma~\ref{lem:poisson}.  The auxiliary function then equals
\begin{equation}
\lag{\lambda}{N}(\bet{0}{N},\ldots,\bet{N}{N}) = \Delta \sum_{l=0}^{N-1}e^{-\sum_{j=0}^{l} \bet{j}{N} } -
\lambda \left( \sum_{l=0}^{N}\bet{l}{N}e^{ -\sum_{j=0}^{l}\bet{j}{N} }
\right).
\end{equation}
From the first order condition, it follows that $\lag{\lambda}{N}$ is
minimized by $\betopt{j}{N} = 1 - e^{-\betopt{j+1}{N}} +
\frac{\Delta}{\lambda}$.

\bibliographystyle{ieeetr}
\bibliography{../../../Bibtex/bibJournalList,../../../Bibtex/database,../../../Bibtex/cooperativeComm,../../../Bibtex/lognormal,../../../Bibtex/MIMO,../../../Bibtex/mimoEstimation,../../../Bibtex/cdma,../../../Bibtex/adaptation,../../../Bibtex/scheduling,../../../Bibtex/book,../../../Bibtex/standard}

\begin{figure}[p]
\centering
\input{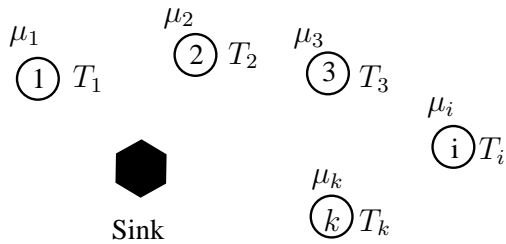}
\caption{A system consisting of a sink and $k$ nodes. Each node has a metric $\mu_i$ and sets its timer $T_i  = f(\mu_i)$. The sink needs to select the node with the higest metric.}
\label{fig:systemTimer}
\end{figure}
\begin{figure}[p]
  \centering \input{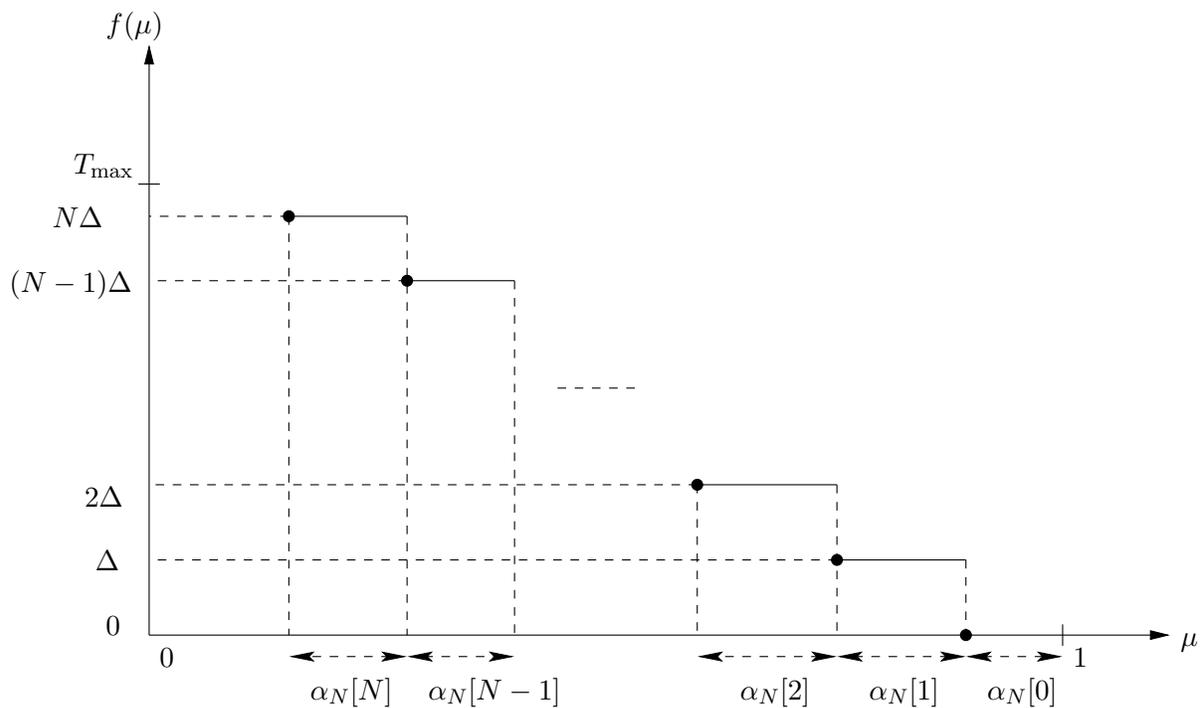}
\caption{Illustration of the optimal metric-to-timer mapping $\mapopt(\mu)$. A user with metric in the interval $[1-\alp{0}{N},1)$ transmits at time 0, a user with metric in the interval $[1-\alp{0}{N} - \alp{1}{N}, 1-\alp{0}{N})$ transmits at time $\Delta$, and so on. A user whose metric is less than $1 - \sum_{i=0}^{N}\alp{i}{N}$ does not transmit.}
\label{fig:TimerScheme}
\end{figure}
%

\begin{figure}[p]
  \centering \includegraphics[width=0.8\columnwidth,
  keepaspectratio]{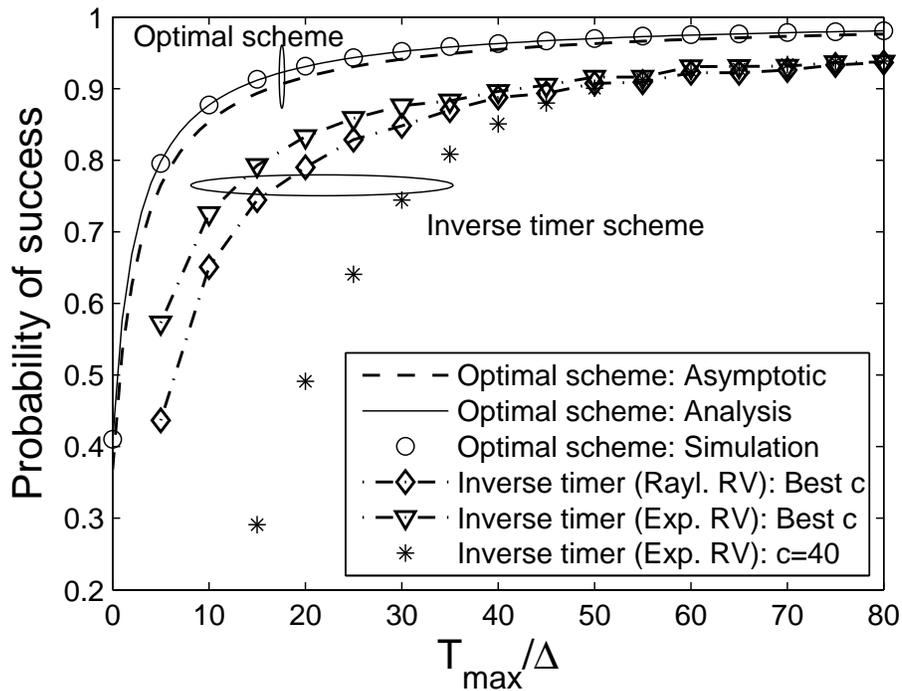}
\caption{Scheme 1: Optimum $\Psopt{N}{k}$ as a function of $\frac{\tmax}{\Delta}$. Also, plotted is the probability of success of the inverse metric mapping ($k=5$) when $c$ is optimized and when $c$ is kept fixed. }
\label{fig:MaxPsCompare}
\end{figure}

\begin{figure}[p]
  \centering \includegraphics[width=0.8\columnwidth,
  keepaspectratio]{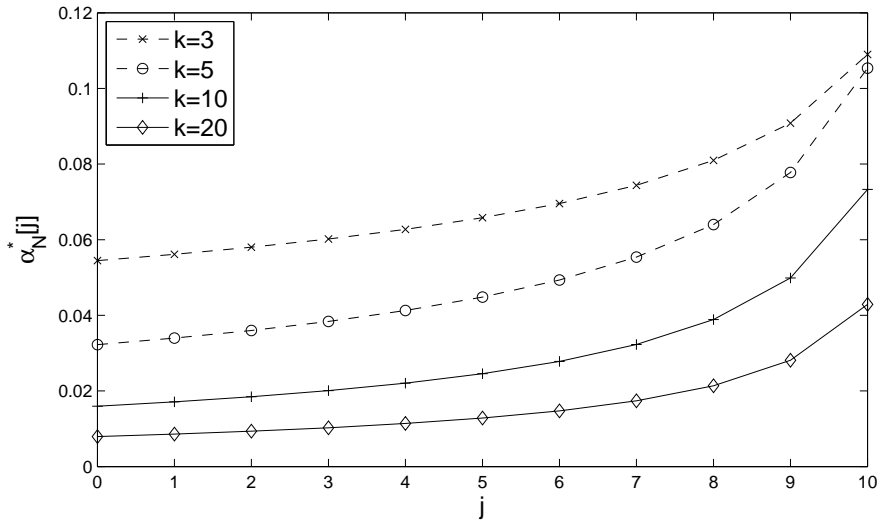}
\caption{Scheme 1: Optimum $\alpopt{j}{N}$ as a function of $j$ and the number of nodes, $k$, for $N = 10$.}
\label{fig:AlpMaxPsN10}
\end{figure}

\begin{figure}[p]
  \centering \includegraphics[width=0.8\columnwidth,
  keepaspectratio]{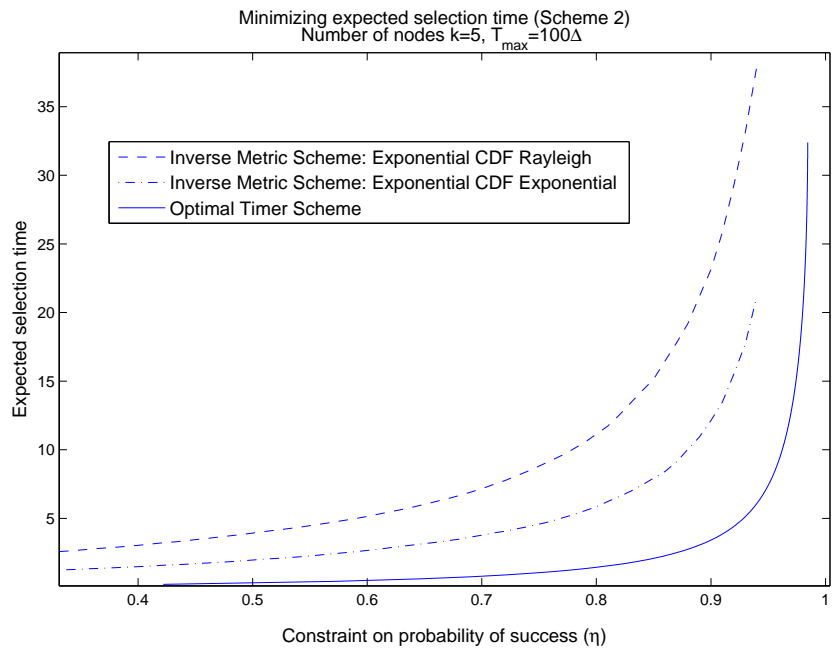}
\caption{Scheme 2: Optimum expected selection time as a function of constraint on probability of success  $(\Ps{N}{k}\le \eta)$ for $\tmax = 100\Delta$ and $k = 5$. Also, plotted is the selection time of the inverse metric mapping.}
\label{fig:MinslotsN100}
\end{figure}

\begin{figure}[p]
  \centering \includegraphics[width=0.8\columnwidth,
  keepaspectratio]{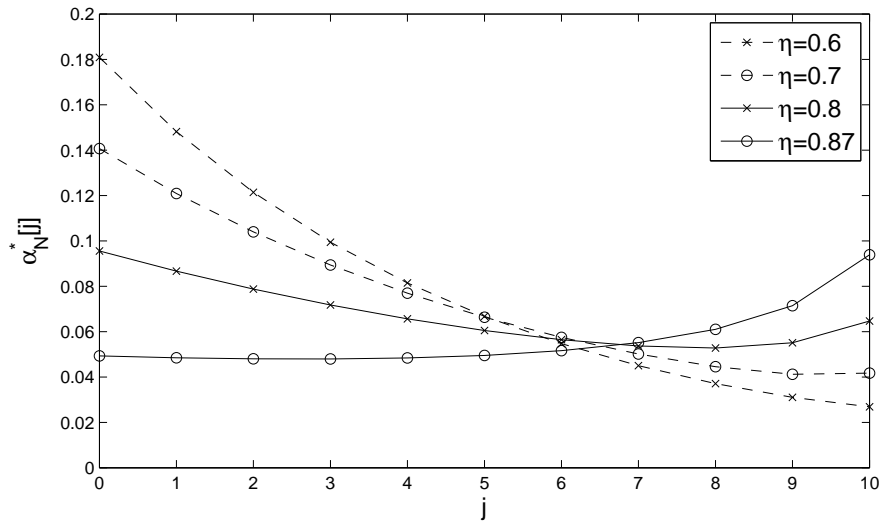} \caption{Scheme 2: Optimum
    $\alp{j}{N}$ as a function of $j$ and $\eta$ for $N = 10$ and
    $k=5$.}
\label{fig:AlpMinSlotsN10k5}
\end{figure}

\begin{table}[p]
\centering
\begin{tabular}{|l|c|c|c|c|c||c|}\hline
$\tmax$ & \multicolumn{5}{|c||}{Optimal Timer Scheme} & Splitting\\ \hline
\multirow{2}{*}{$288~\mu$s} & Prob.\ success &  0.75 & 0.85 & 0.90 & 0.98 & 0.63\\ \cline{2-7}
& Ave.\ selection time ($\mu$s) &    17.8 & 35.0 & 58.3 & -- & 233.3\\ \hline\hline
\multirow{2}{*}{$1296~\mu$s} & Prob.\ success &  0.75 & 0.85 & 0.90 & 0.98 & 0.99\\ \cline{2-7}
& Ave.\ selection time ($\mu$s) &  17.7 & 34.9 & 56.4 & 369.2 & 354.4 \\ \hline
\end{tabular}
\caption{Comparison of Optimal Timer Based Scheme 2 and Splitting Scheme}
\label{tbl:compare}
\end{table}

\end{document}